\newenvironment{psmallmatrix}
  {\left(\begin{smallmatrix}}
  {\end{smallmatrix}\right)}
\begin{document}
\theoremstyle{remark}
\newtheorem{theorem}{Theorem}
\newtheorem{problem}{Problem}
\newtheorem{proposition}{Proposition}
\newtheorem{example}{Example}

\title{Qudit quantum computation on matrix product states with global symmetry}

\author{Dong-Sheng Wang}
\affiliation{Department of Physics and Astronomy, University of British Columbia,  Vancouver, BC, Canada}
\author{David T. Stephen}
\affiliation{Department of Physics and Astronomy, University of British Columbia,  Vancouver, BC, Canada}
\author{Robert Raussendorf}
\affiliation{Department of Physics and Astronomy, University of British Columbia,  Vancouver, BC, Canada}

\begin{abstract}
Resource states that contain nontrivial symmetry-protected topological order
are identified for universal single-qudit
measurement-based quantum computation.
Our resource states fall into two classes:
one as the qudit generalizations of the 1D qubit cluster state,
and the other as the higher-symmetry generalizations of the spin-1
Affleck-Kennedy-Lieb-Tasaki (AKLT) state,
namely, with unitary, orthogonal, or symplectic symmetry.
The symmetry in cluster states protects information propagation (identity gate),
while the higher symmetry in AKLT-type states enables nontrivial gate computation.
This work demonstrates a close connection
between measurement-based quantum computation and symmetry-protected topological order.
\end{abstract}

\pacs{03.67.Ac, 71.10.-w, 02.20.-a}

\date{\today}
\maketitle


\section{Introduction}
\label{sec:intr}

Measurement-based quantum computation~\cite{RB01} (MBQC)
is an alternative of the circuit model~\cite{NC00},
and it has the advantage that
once a certain resource state is available,
universal quantum computation can be executed
by local non-entangling projective measurements.
In recent years, new types of universal resource states have been identified~\cite{GE07,CZG+09,CDJ+10,WAR11,WAR12,Miy11,DBB12,WHR14,NW15,MM15b}
beyond the usual cluster states and graph states~\cite{HDE+06}.
In particular,
states which are unique
ground states of local Hamiltonians can be
easily prepared by cooling.
Among these novel states,
the AKLT-type states, which have non-trivial
bosonic symmetry protected topological (SPT) orders~\cite{CGW11,SPC11,CGL+13,DQ13a,DQ13b},
have triggered the investigation of highly symmetric ground states~\cite{NMC+13,GHG+10,NLC+16}.

In this work we explore the connection between MBQC
and symmetric quantum states with nontrivial SPT orders.
We mainly study the computational properties of two classes of states:
one as the qudit generalizations of the
one-dimensional (1D) cluster state,
and the other of the AKLT state.
For this purpose,
we employ the correlation space picture~\cite{VC04,GE07} of MBQC
where resource states are represented as matrix product states (MPS)~\cite{PVW+07}.
For cluster states, we define 1D qudit cluster states
that have different bosonic
SPT orders according to the symmetry $\mathbb{Z}_d\times \mathbb{Z}_d$,
as the generalizations of 1D qubit cluster state
and the qudit cluster state defined before~\cite{ZZX+03,Cla06}.
We present a detailed analysis of their properties,
including their symmetry, parent Hamiltonian,
representation by quantum circuits,
and nontrivial gate computation.
For AKLT-type states, we show that there are universal resource states
for the $SU(N)$~\cite{GR07,KHK08,MUM+14},
$SO(N)$~\cite{TZX08}, and $Sp(2n)$~\cite{RS91,WZ05,FC09}
symmetric generalizations of AKLT states,
and that they have the appealing features including
nontrivial gates induced by their symmetry
and the ability to prepare and readout the virtual space via projection.
Our study also demonstrates that,
for both 1D cluster states and AKLT-type states,
the correlation space computation is equivalent to the real space computation of MBQC.

Our work provides families of 1D resource states that have nontrivial SPT orders.
On one hand, this extends some previous study, e.g., Ref.~\cite{GE10},
which focused on the cases with both on-site and virtual system of dimension two,
and not in the context of SPT orders.
On the other hand, the resource states we have identified can be viewed as representative points
in the domain of the corresponding SPT phases,
hence serve as a stepping-stone for the understanding of computational properties of
SPT phases with high symmetries,
which so far only have been explored for the cases of finite groups
and $SU(2)$~\cite{BM08,DB09,ESB+12,PW15,MM15}.

This paper contains the following parts.
In Section~\ref{sec:review} we briefly review
MBQC on the qubit cluster state and spin-1 AKLT state,
highlighting the role of symmetry
and the equivalence between the virtual space picture and real space picture for both cases.
We study the qudit cluster states in Section~\ref{sec:clu-d},
and focus on identifying their SPT orders and computational properties.
The AKLT-type states with unitary, orthogonal, and symplectic symmetries
are studied in Section~\ref{sec:gaklt},
and we find there exist large classes of universal resource states.
In Section~\ref{sec:conc} we conclude.

\section{Cluster state and AKLT state}
\label{sec:review}

In this section, we review MBQC based on the 1D qubit cluster state and spin-1 AKLT state
both in the real space picture and virtual space picture,
which is necessary to understand their generalizations in later sections.

\subsection{MPS circuit and correlation space}
\label{subsec:mps}

We start from MPS theory~\cite{PVW+07,Sch11,Oru14}
following the methodology of correlation space quantum computation~\cite{GE07}.
An arbitrary finite-dimensional quantum state can be written as
\begin{equation}\label{}
  |\Psi\rangle=\sum_{i_1,\dots,i_N=0}^{d-1}\langle R|
A^{[1]}_{i_1} \cdots A^{[N]}_{i_N} |L\rangle |i_1\dots i_N\rangle
\end{equation}
for open boundary condition (OBC) case with boundary states $\langle R|$ and $|L\rangle$
for $N$ local systems (spins) with local dimension $d$.
The tensor network representation of a MPS is shown in Fig.~\ref{fig:MPScir}(a).
The states $\langle R|$ and $|L\rangle$ live in,
and the operators $A^{[n]}_{i_n}$ ($n=1,\dots,N$) act on an ancillary space of dimension $\chi$.
This space is also known as the bond space, correlation space, or the virtual space,
and $\chi$ is often known as the bond dimension.

\begin{figure}[t!]
  \centering
  \includegraphics[width=.45\textwidth]{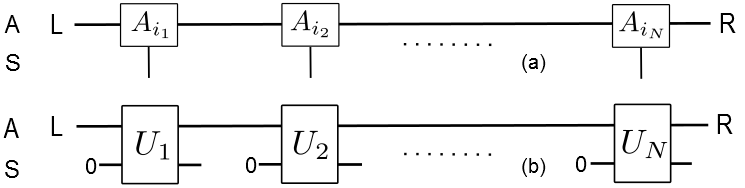}\\
  \caption{(a) The tensor network representation of a matrix product state.
  The horizontal line is for ancilla (correlator) A,
  and the vertical legs are for the system S.
  (b) The quantum circuit representation of a matrix product state.
  The system S containing $N$ spins is initially at state $|00\cdots 0\rangle$,
  and the ancilla (correlator) A is initially at state $|L\rangle$.
  Each unitary operator $U_n$ acts on the ancilla and one system spin.
  }\label{fig:MPScir}
\end{figure}

In this work we consider translation-invariant systems,
so the site superscripts can be dropped.
The set of Kraus operators $\{A_{i_n}\}$ on each site forms a quantum channel $\mathcal{E}_n$ with
the trace preserving condition $\sum_{i_n} A_{i_n}^\dagger A_{i_n}=\mathds{1}$
following from the normal form of MPS~\cite{Sch11}.
From the dilation theorem~\cite{Sti55}, a channel $\mathcal{E}_n$ can be realized by
a unitary operator $U_n$ acting on a larger space.
As a result, a MPS can be represented by a quantum circuit~\cite{SSV+05},
shown in Fig.~\ref{fig:MPScir}(b),
wherein each unitary operator $U_n$ is defined such that
\begin{equation}\label{}
  U_n |0\rangle=\sum_{i_n} |i_n\rangle \otimes A_{i_n}.
\end{equation}
\noindent The projection $|R\rangle\langle R|$ on the ancilla results in the MPS $|\Psi\rangle$,
while tracing out the system results in a sequence of quantum channels $\mathcal{E}_n$
acting on the correlator such that
\begin{equation}\label{}
\mathcal{E}_n(\rho)=\sum_{i_n} A_{i_n} \rho A_{i_n}^\dagger.
\end{equation}
Note that for translation-invariant systems $\mathcal{E}_n$ and $U_n$ are both site-independent.

The above framework provides a natural starting point to understand MBQC.
For MBQC on cluster states, 1D wires can be cut out to represent qubits,
while junctions among wires enable entangling gates.
A 1D cluster state can be expressed as a MPS,
so the projective measurement on each spin in it
can be interpreted in the correlation space~\cite{GE07}.
In general, given a 1D resource state, a
projective measurement on site $n$
in the standard basis $\{|i_n\rangle\}$
induces the set of operators $\{A_{i_n}\}$ on the correlation space,
while measurement in a basis rotated by a unitary operator $W_n$ leads to operators
\begin{equation}\label{}
  \tilde{A}_{i_n}=\sum_{j_n} w_{i_nj_n} A_{j_n}
\end{equation}
for $\langle i_n|W_n=\langle w_{i_n}|=\sum_{j_n} w_{i_nj_n} \langle j_n|$.
This leads to the so-called real space picture and virtual space picture descriptions of MBQC.
Namely, in the real space picture, the computation is carried out on the physical systems by measurements,
while in the virtual space picture, the computation is carried out on the virtual space.
The projective measurements are selective,
i.e., each outcome from a projector is recorded instead of mixing together with each other,
and the operators $\tilde{A}_{i_n}$ enacted should be unitary up to byproduct,
which further should be able to be pulled out and correctable after the computation.
A sequence of projective measurements can then be interpreted in the correlation space
as the action of a sequence of unitary gates.

Furthermore, for translation-invariant MPS with an on-site symmetry $G$, e.g.,
the cluster states and AKLT-type states studied in this work,
the following symmetry condition holds
\begin{equation}\label{eq:symcond}
  \sum_j u_{ij}(g) A_j=V(g)^\dagger A_i V(g)
\end{equation}
for $U(g)=(u_{ij})(g)$ as a linear unitary representation of $g\in G$,
and $V(g)$ as a projective unitary
representation of $g$~\cite{FNW92,SWP+09,CGW11,SPC11}.
Note that to keep a MPS invariant,
an on-site symmetry operation involves a symmetry operator on each site in the system.
The symmetry condition~(\ref{eq:symcond}) basically means
a projective measurement in a rotated basis
by a symmetry operation $U(g)$ can be equivalently understood as
a conjugation by $V(g)$ on the correlation space.
In particular, the relation~(\ref{eq:symcond}) is employed for
the byproduct propagation for cluster states with on-site $\mathbb{Z}_d\times \mathbb{Z}_d$ symmetry for a certain $d$,
and gate computation for AKLT-type states with on-site Lie group symmetries.

\subsection{Qubit cluster state}
\label{subsec:clu}

We now apply the two pictures described above to the 1D qubit cluster state.
We will emphasize the equivalence between the two pictures,
which has not been explicitly demonstrated before.
Such an equivalence also holds for AKLT states, as will be seen later on.

To prepare a qubit cluster state, each qubit is initially in the state $|+\rangle=\frac{1}{\sqrt{2}}(|0\rangle+|1\rangle)$,
and the two-qubit controlled-phase gate $CZ=[\mathds{1},0;0,Z]$ is applied on each neighboring pair of qubits.
For information propagation, i.e., identity gate along the wire,
the following relation holds
\begin{equation}\label{}
  \langle s|CZ|\text{in}\rangle|+\rangle=HP_s|\text{in}\rangle
\end{equation}
for an unknown input state $|\text{in}\rangle$ on the first qubit,
projector $P_s=|s\rangle\langle s|$ ($s=0,1$) in the $Z$ basis on the first qubit,
and Hadamard gate $H$.
Projection in the $X$ basis then leads to
\begin{equation}\label{}
  \langle s|(H\otimes\mathds{1})CZ|\text{in}\rangle|+\rangle=HZ^s|\text{in}\rangle.
\end{equation}
A further projection in the $X$ basis on the second qubit leads to
\begin{align}\label{eq:teleport}\nonumber
  &\langle s|\langle t|(H\otimes H\otimes\mathds{1})(CZ\otimes\mathds{1})(\mathds{1}\otimes CZ)|\text{in}\rangle|+\rangle|+\rangle \\
  &=HZ^t HZ^s |\text{in}\rangle= X^t Z^s |\text{in}\rangle.
\end{align}
This shows that an unknown state $|\text{in}\rangle$ on the first qubit
is teleported to the third qubit with Pauli byproduct $X^t Z^s$
(also see Refs.~\cite{GC99,ZLC00}).

When expressing the cluster state as a MPS for the virtual space interpretation,
the quantum circuit to generate it (according to Fig.~\ref{fig:MPScir}) can be easily obtained as follows:
each unitary $U$ in the circuit is site-independent and takes the form
\begin{equation}\label{}
  U=CZ \cdot \textsc{swap}
\end{equation}
for the qubit swap gate $\textsc{swap}|ij\rangle=|ji\rangle$.
The ancilla A is in the unknown initial state $|\text{in}\rangle$,
while each qubit in S is initialized at state $|+\rangle$ instead of $|0\rangle$.
The operators $X^t Z^s$ become the Kraus operators in the virtual space picture
after blocking each two sites together.

To execute a general qubit gate,
a sequence of measurements in rotated bases~\cite{RB01}
leads to
\begin{equation}\label{}
  |\text{out}\rangle=HZ(\alpha_4)Z^{s_4} HZ(\alpha_3)Z^{s_3} HZ(\alpha_2)Z^{s_2}HZ^{s_1}|\text{in}\rangle
\end{equation}
for both of the pictures.
Here, the rotated basis on the $i$-th qubit is by gate $Z(\alpha_i)$ for
$Z(\alpha):=e^{-i\alpha Z}=[e^{-i\alpha},0;0,e^{i\alpha}]$.
This shows the equivalence of the real and virtual pictures for the qubit cluster state
for both information propagation and gate computation.

\subsection{Spin-1 AKLT state}
\label{subsec:1aklt}

For a spin-1 chain the AKLT ground state with $SO(3)$ symmetry~\cite{AKLT87,AKLT88}
has been shown to be a resource state.
In translation-invariant MPS form it can be described by the
three Pauli operators (with coefficient $\sqrt{1/3}$ )
$X$, $Y$, and $Z$ with bond dimension $\chi=2$~\cite{FNW92,SWP+09}.
The symmetry $SO(3)$ is represented fundamentally on the spin-1 for each physical site,
and represented projectively on the virtual spin-1/2 in the well-known valence bond state picture.

For our purpose, we first find a unitary MPS circuit to represent
the AKLT state as follows.
Each local spin-1 in the system is encoded by two qubits and initialized in the state
\begin{equation}\label{}
|+_3\rangle:=\frac{1}{\sqrt{3}} (|01\rangle+|10\rangle+|11\rangle).
\end{equation}
The unitary operator $U$ in the circuit is also site-independent
and is a uniformly-controlled gate with the form
\begin{align}\label{eq:Uaklt}
  U=P_{00}\otimes \mathds{1}+ P_{10}\otimes X+ P_{01}\otimes Z+ P_{11}\otimes Y,
\end{align}
with the ancilla as the target and the local spin as the control,
and $P_{s_1s_2}=|s_1s_2\rangle\langle s_1s_2|$
denote projectors on a local spin.
The ancilla carries the initial unknown state $|\text{in}\rangle$,
while projective measurement on a local spin with outcomes
$s_1s_2\in\{01,10,11\}$ yields gates $X^{s_1}Z^{s_2}$
acting on the input state $|\text{in}\rangle$ as the byproduct,
and the state $|\text{in}\rangle$ is then teleported to the next site.
Measurement in a rotated basis, e.g.,
by an orthogonal operator $R_x$
(corresponding to a qubit rotation $X(\theta)$ due to $SU(2)/\mathbb{Z}_2\cong SO(3)$),
induces a nontrivial gate according to the symmetry condition
\begin{equation}\label{}
  \sum_j r_{ij}(g) A_j=V(g)^\dagger A_i V(g)
\end{equation}
for $R(g)=(r_{ij})(g)$ as the linear unitary representation of $g\in SO(3)$,
and $V(g)\in SU(2)$ as the projective unitary representation of $g$.
An arbitrary qubit gate can be induced by measurements in rotated bases on three sites
and yields
\begin{align}\label{eq:out-aklt}\nonumber
  |\text{out}\rangle=&X(-\gamma)X^{r_1}Z^{r_2}X(\gamma) Z(-\beta)X^{t_1}Z^{t_2}Z(\beta) \\ &X(-\alpha)X^{s_1}Z^{s_2}X(\alpha)|\text{in}\rangle.
\end{align}
As is previously known~\cite{BM08},
the success probability of a gate is $2/3$
since, e.g.,
$X(-\alpha)X^{s_1}Z^{s_2}X(\alpha)$ equals $X(2\alpha)$ for
$s_1s_2=11$ or $01$,
and $\mathds{1}$ for $s_1s_2=10$.
When the recorded measurement outcomes $s_1s_2$ shows that
an identity gate is induced, the measurement is repeated on the next site until a
success.
This general feature will be generalized for AKLT states with higher symmetries
in Section~\ref{sec:gaklt}.

Now, similar with the cluster state, we can obtain
the circuit to represent AKLT state in real space picture by employing swap gates.
Each spin-1 is represented by the three-dimensional subspace of two qubits and has the initial state $|+_3\rangle$.
For each site $n$, let us label the two qubits as $n_1$ and $n_2$.
Then a swap gate is applied between $n_2$ and $(n+1)_2$,
followed by the uniformly-controlled gate $U$~(Eq.~\ref{eq:Uaklt}) acting on $n_2$, $(n+1)_1$, and $(n+1)_2$,
with $(n+1)_2$ as the target.
Note that these entangling gates between different sites do not commute,
which means AKLT state is on a \emph{directed} graph,
different from the cluster state case.
With this circuit construction, it is immediate to see the equivalence
between the real and virtual pictures for MBQC on AKLT state.

We have seen above that for both the cluster state and AKLT state
the byproduct are Pauli operators.
This is actually a manifestation of their symmetries:
$\mathbb{Z}_2\times \mathbb{Z}_2$ for cluster state and $SO(3)$ for AKLT state.
In the following we generalize the connection between
1D bosonic SPT order and MBQC
based on the recent
SPT phase classifications~\cite{CGW11,SPC11,CGL+13,DQ13a,DQ13b}.

\section{Qudit cluster states MBQC}
\label{sec:clu-d}

In this section we construct qudit cluster states that have different SPT orders.
Our study shows that the computational power of a qudit cluster state
is indicated by its bond dimension $b$, namely, the gates induced by projective measurements
form the whole group $SU(b)$.
This also means cluster states with different SPT orders but
with the same bond dimension have the same computational power.

\subsection{Construction of states}

For qudits, the Pauli operators are generalized to the
Heisenberg-Weyl operators $X^jZ^k$ with
\begin{subequations}
\begin{alignat}{2}\label{}
  X^j= & \sum_\ell |\ell\rangle\langle\ell+j|, \\
  Z^k= & \sum_\ell \omega^{k\ell} |\ell\rangle\langle\ell|,
\end{alignat}
\end{subequations}
which satisfy
\begin{equation}\label{eq:hwc}
  X^j Z^k=\omega^{jk} Z^k X^j,
\end{equation}
for $\omega=e^{i2\pi/d}$, $j,k\in \mathbb{Z}_d$.

We construct qudit cluster states using the generalized controlled-$Z$ gate
\begin{equation}\label{eq:Sx}
  S^x:=\sum_\ell P_\ell \otimes Z^{x\ell}
\end{equation}
for $x=1,2\dots,d-1$.
To prepare a 1D qudit cluster state,
we let each qudit spin be in initial state
$|+_d\rangle:=\frac{1}{\sqrt{d}}\sum_\ell |\ell\rangle$ as the analog of the qubit case. Now we can define different cluster states
by the alternating application of gates $S^x$ and $S^y$ along the wire for $x+y=d$.
Each qudit cluster state is denoted by $|C_d(x,y)\rangle$ in this work,
and it is shown later that they have different SPT orders.

The stabilizer operators and parent Hamiltonian
for each $|C_d(x,y)\rangle$ can be easily obtained.
The way to derive the qudit stabilizers follows the same procedure as for the qubit case.
If we denote the $CZ$ gate by $S$ and the qubit cluster state by $|C_2\rangle$,
the stabilizers are derived by the following relations
 \begin{equation}\label{}
   SXS^\dagger|C_2\rangle=|C_2\rangle,\; SX[a]S^\dagger=X[a]\otimes_b Z[b]
 \end{equation}
for $a,b$ as the label of sites and $b$ as the nearest neighbor of $a$.
For the qudit case, the lattice is bipartite, containing even sites and odd sites.
An even site is first acted on by $S^x$ then by $S^y$, an the opposite for an odd site.
We find
\begin{align}\label{}
  S^x X[a] S^y  =& X[a] \otimes_b Z[b]^{-x},
\end{align}
and also its hermitian conjugate.
For each state $|C_d(x,y)\rangle$ the generating stabilizers take the form
$Z^{x}XZ^{y}$ for odd sites, and $Z^{y} X Z^{x}$ for even sites.

The so-called parent Hamiltonian, denoted by $H_x$ for $|C_d(x,y)\rangle$,
follows directly from the sum of stabilizers and their hermitian conjugate,
with some minor differences for PBC and OBC cases (e.g., Ref.~\cite{ZZX+03}).
For OBC, the parent Hamiltonian cannot include the boundary terms since they break the symmetry,
which means there will be a free edge state at the boundary.
For PBC, the state $|C_d(x,y)\rangle$ is the same as $|C_d(y,x)\rangle$ (up to one-site translation),
e.g., $|C_5(1,4)\rangle$ is the same as $|C_5(4,1)\rangle$, $|C_5(2,3)\rangle$ is the same as $|C_5(3,2)\rangle$,
while $|C_5(1,4)\rangle$ is \emph{not} the same as $|C_5(2,3)\rangle$.
Therefore, for PBC a qudit cluster state is the unique ground state of its parent Hamiltonian.

\subsection{\texorpdfstring{$\mathbb{Z}_d\times \mathbb{Z}_d$}{Lg} symmetry \& SPT order}

Here we identify the symmetry and SPT order of the qudit cluster states we defined.
As demonstrated in Refs.~\cite{CGW11,SPC11,CGL+13},
1D bosonic SPT phases protected by an on-site symmetry group $G$
are labelled by elements of the second cohomology group.
According to
\begin{equation}\label{}
H^2(\mathbb{Z}_d\times \mathbb{Z}_d,U(1))=\mathbb{Z}_d,
\end{equation}
there are $d$ phases with one trivial
protected by the symmetry $\mathbb{Z}_d\times \mathbb{Z}_d$.
Suppose the generators of $\mathbb{Z}_d\times \mathbb{Z}_d$
are $(g,e)$ and $(e,h)$ for $g^d=e$, $h^d=e$.
Now different projective representations $v_x$ can be introduced
\begin{equation}\label{eq:rhox}
  v_x: (g_i,h_j) \mapsto X^i Z^{xj},
\end{equation}
for $x=1,2,\dots, d-1$.
From
\begin{equation}\label{}
  v_x((g_i,h_j))v_x((g_a,h_b))=\omega_x(ij,ab)v_x((g_{i+a},h_{j+b}))
\end{equation}
and Eq.~(\ref{eq:hwc}) we can find $d-1$ different 2-cocycles
\begin{equation}\label{}
  \omega_x(ij,ab)=\omega^{-xaj},
\end{equation}
and each SPT phase can be labeled by a single integer $x$.
Among those nontrivial phases,
phases with $\text{gcd}(x,d)=1$, i.e., $x$ and $d$ are coprime,
are maximally noncommutative (MNC)~\cite{BZ98,ESB+12,EBD13} with cocycle $\omega_x$,
which essentially means
there exists a unique projective irreducible representation (irrep) of the symmetry
$\mathbb{Z}_d\times \mathbb{Z}_d$ with dimension $d$.
Also for phases that are not MNC,
the dimension of a projective irrep of the symmetry is $d/\text{gcd}(x,d)$.
It turns out the MNC condition is crucial for our study of qudit cluster states.

To show the SPT order of qudit cluster states,
we first need to introduce generalized Fourier operators
\begin{equation}\label{eq:fk}
  F_k:=\frac{1}{\sqrt{d}}\sum_{j\ell } \omega^{kj\ell } |j\rangle\langle \ell|
\end{equation}
for $\text{gcd}(k,d)=1$ and $1\leq k\leq d-1$,
which are all unitary
\begin{equation}\label{}
  F_k^\dagger F_k = F_k F_k^\dagger=\mathds{1},
\end{equation}
and also satisfy
\begin{equation}\label{eq:permutation}
  F_k^4=\mathds{1},\; F_k^2=\sum_\ell |\ell\rangle\langle-\ell|\equiv \Pi,
\end{equation}
which is a permutation independent of $k$.
The following properties also hold
\begin{equation}\label{}
  F_k|j\rangle=Z^{kj}|+_d\rangle,\; F_k^\dagger|j\rangle=Z^{-kj}|+_d\rangle,
\end{equation}
and
\begin{equation}\label{}
F_kZ^jF_k^\dagger=X^{-(Nd-j)/k}, F_k^\dagger Z^{kj}F_k=X^{-j},
\end{equation}
for an integer $N$ s.t. $(Nd-j)/k\in \mathbb{Z}_d$ mod $d$.
The standard Fourier operator $F$ is recovered as a special case for $k=1$.

Now consider a building block of three qudits with the first in some unknown state $|\psi\rangle$,
after the two entangling gates the resulting state is
$|\Psi\rangle:=(\mathds{1}\otimes S^y)(S^x\otimes\mathds{1})|\psi\rangle|+_d\rangle|+_d\rangle$.
The following relation holds
\begin{align}\label{eq:qudpro1}
  \langle s|\langle t|(X^i\otimes X^j)|\Psi\rangle
  =F_y P_{t-j} F_x P_{s-i} |\psi\rangle.
\end{align}
Also note $F_x=F_y^\dagger$.
Two successive projections in a basis rotated by Fourier operators lead to
\begin{align}\label{}
  \langle s|\langle t|(F_\alpha\otimes F_\beta\otimes \mathds{1})|\Psi\rangle
  =F_y Z^{\beta t} F_x Z^{\alpha s} |\psi\rangle,
\end{align}
which is a generalization of teleportation in Eq.~(\ref{eq:teleport})
to 1D qudit cluster states we defined.

Now we can demonstrate the symmetry and SPT order of 1D qudit cluster states
by the following proposition.
\begin{proposition}\label{prop:clustspt}
    A qudit cluster state $|C_d(x,y)\rangle$
    and its parent Hamiltonian $H_x$ have the
    on-site $\mathbb{Z}_d\times \mathbb{Z}_d$ symmetry
    after blocking two sites together
    with $x$ labeling its SPT order protected by $\mathbb{Z}_d\times \mathbb{Z}_d$.
\end{proposition}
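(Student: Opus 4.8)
The plan is to split the proposition into its two assertions --- on-site $\mathbb{Z}_d\times\mathbb{Z}_d$ invariance of the pair $(|C_d(x,y)\rangle, H_x)$, and the identification of $x$ with the cohomology label --- and to route both through the single symmetry relation Eq.~(\ref{eq:symcond}). First I would fix the blocking: because the construction alternates $S^x$ and $S^y$, no single-site operation is translation invariant, so I block one even and one odd site into a unit cell of physical dimension $d^2$ and bond dimension $d$. On this cell I propose the linear on-site representation $U(g_i,h_j)=X^i\otimes X^j$, i.e.\ the restrictions of the two global generators $\prod_{\text{even}}X_n$ and $\prod_{\text{odd}}X_n$, together with the virtual projective representation $V(g_i,h_j)=v_x(g_i,h_j)=X^iZ^{xj}$ from Eq.~(\ref{eq:rhox}). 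The SPT hallmark is then built in: the physical action is a genuine linear representation, while the virtual action is only projective.

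For invariance of the state I would read the blocked Kraus operators $A_{st}=F_yP_tF_xP_s$ off the building-block identity Eq.~(\ref{eq:qudpro1}) and verify Eq.~(\ref{eq:symcond}) directly: the claim is that inserting $X^i\otimes X^j$ on the two physical legs, which by Eq.~(\ref{eq:qudpro1}) merely shifts the outcomes $s\mapsto s-i$, $t\mapsto t-j$, is reproduced by conjugating $A_{st}$ with $V=X^iZ^{xj}$. This is a short computation pushing the Heisenberg--Weyl operators through $F_x$ and $F_y=F_x^\dagger$ using $F_k|j\rangle=Z^{kj}|+_d\rangle$ and the commutation rule Eq.~(\ref{eq:hwc}). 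For the Hamiltonian I would argue at the level of stabilizers: each generator $Z^xXZ^y$ (odd) or $Z^yXZ^x$ (even) is moved past $U(g)$ at the cost of a phase $\omega^{x+y}=\omega^d=1$, so that the constraint $x+y=d$ is exactly what forces $[H_x,U(g)]=0$; hence $H_x$, a sum of such generators and their conjugates, is $\mathbb{Z}_d\times\mathbb{Z}_d$-invariant.

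For the SPT label I would exploit that $V=v_x$ is only projective: evaluating $V(g)V(g')$ and comparing with $V(gg')$ via Eq.~(\ref{eq:hwc}) returns precisely the $2$-cocycle $\omega_x(ij,ab)=\omega^{-xaj}$ already isolated above Eq.~(\ref{eq:rhox}). Since $H^2(\mathbb{Z}_d\times\mathbb{Z}_d,U(1))=\mathbb{Z}_d$ and the classes $[\omega_x]$ run over all nontrivial elements as $x=1,\dots,d-1$, the integer $x$ is exactly the label of the SPT phase containing $|C_d(x,y)\rangle$. To make this watertight one checks that $\omega_x$ is not a coboundary, which follows from the noncommutativity already recorded, the maximally noncommutative case $\gcd(x,d)=1$ giving a faithful dimension-$d$ projective irrep on the whole bond space.

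The step I expect to be the main obstacle is the direct verification of Eq.~(\ref{eq:symcond}) for the blocked tensor. The bookkeeping must correctly handle the two Fourier operators on the two sublattices, the fact that $F_k|j\rangle$ yields a uniform superposition weighted by $Z^{kj}$ rather than a basis vector, and the residual permutation $\Pi=F_k^2$ from Eq.~(\ref{eq:permutation}), all of which must be absorbed consistently so that the physical shift matches conjugation by $v_x$ exactly rather than up to a spurious phase or permutation. A lighter but still necessary point is justifying that two sites are both necessary (single-site translation invariance is broken by the $S^x/S^y$ alternation) and sufficient (blocking restores it) for the symmetry to be strictly on-site.
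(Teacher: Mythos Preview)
Your strategy for the coprime case $\gcd(x,d)=1$ is essentially the paper's own: both read the blocked Kraus operators off Eq.~(\ref{eq:qudpro1}) as $A_{st}=F_yP_tF_xP_s$, identify the on-site linear action as $X\otimes\mathds{1}$ and $\mathds{1}\otimes X$, and verify that the induced virtual action is $v_x(g_i,h_j)=X^iZ^{xj}$, whose cocycle $\omega_x$ labels the phase. Your stabilizer argument for $[H_x,U(g)]=0$ is a minor rephrasing of the paper's observation that $\prod_{\text{odd}}X$ and $\prod_{\text{even}}X$ are themselves products of stabilizers (the $Z$ tails cancel because $x+y=d$), so both commute with each local term.

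The genuine gap is the non-coprime case $\gcd(x,d)=s>1$. Your entire machinery is built on the blocked tensor $A_{st}=F_yP_tF_xP_s$, but the generalized Fourier operator $F_k$ of Eq.~(\ref{eq:fk}) is unitary only when $\gcd(k,d)=1$; for $s>1$ the operators $F_x,F_y$ are singular, the would-be Kraus family fails the trace-preserving condition, and the bond dimension is not $d$ but $b=d/s$. Consequently the projective representation $X^iZ^{xj}$ on $\mathbb{C}^d$ you propose is not the one acting on the actual correlation space, and the verification of Eq.~(\ref{eq:symcond}) as you outline it cannot go through. The paper closes this gap by a structural observation you are missing: writing $d=sb$, $x=sa$, one has the factorization
\[
|C_{sb}(sa,sb-sa)\rangle=|C_b(a,b-a)\rangle\,|+_s\rangle^{\otimes N},
\]
coming from $Z_d^x=\mathds{1}_s\otimes Z_b^a$ and hence $S_d^x=\mathds{1}_s\otimes S_b^a\otimes\mathds{1}_s$. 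The $s$-dimensional factor is inert, the genuine bond space is $\mathbb{C}^b$, and the projective action there is $v(g)=X_b$, $v(h)=Z_b^a$, whose commutator phase $e^{2\pi i a/b}=e^{2\pi i x/d}$ still returns the label $x\in\mathbb{Z}_d$. Without this reduction your argument establishes the proposition only on the MNC stratum.
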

\begin{proof}
    The symmetry follows from products of stabilizers for all even or odd sites.
    Denoting the two generators of $\mathbb{Z}_d\times \mathbb{Z}_d$ as $g$ and $h$,
    the on-site linear unitary representations are
    $u(g)=X \otimes \mathds{1}$ and $u(h)=\mathds{1}\otimes X$.

    To show the SPT order, we check the projective representation of the symmetry.
    First, for the case $\text{gcd}(x,d)=1$ the following relations hold
\begin{subequations}
\begin{alignat}{2}\label{}
  X^w F_y P_t F_x P_s X^{-w}&=F_y P_t F_x X^{wy/(d-x)}P_{s} X^{-w},\\
  Z^m F_y P_t F_x P_s Z^{-m}&=F_y X^{m/(d-y)} P_t X^{-m/x} F_x P_{s} ,
\end{alignat}
\end{subequations}
and for $x+y=d$ the above two relations become
\begin{subequations}
\begin{alignat}{2}\label{}
  X^w F_y P_t F_x P_s X^{-w}&=F_y P_t F_x P_{s-w},\\
  Z^m F_y P_t F_x P_s Z^{-m}&=F_y  P_{t-m/x}  F_x P_{s}.
\end{alignat}
\end{subequations}
    With property~(\ref{eq:qudpro1}), we find $m=x$,
    and the projective representations are
    $v(g)=X$, $v(h)=Z^x$, and in general $v(g_i,h_j)=X^iZ^{xj}$.
    With Eq.~(\ref{eq:rhox}), we can see that
    the SPT order of a qudit cluster state with $\text{gcd}(x,d)=1$ is labeled by $x$.

Next, we study the case when $x$ and $d$ are not coprime,
i.e., $\text{gcd}(x,d)=s$ for an integer $s$.
Let $d=sb$ and $x=sa$ for integers $a$ and $b$.
Such a cluster state only has bond dimension $b$ since
\begin{equation}\label{eq:b}
|C_{sb}(sa,sb-sa)\rangle=|C_b(a,b-a)\rangle |+_s\rangle^{\otimes N},
\end{equation}
for $N$ sites, which is due to the factorizations
\begin{equation}\label{}
Z_d^r= \mathds{1}_r \otimes Z_b , \; S_d^x=\mathds{1}_s \otimes S_b^a \otimes \mathds{1}_s,
\end{equation}
with subscript denoting the dimension for clarity.
This means there is an $s$-dimensional subspace on each site that undergoes trivial evolution.
    For an arbitrary state $|\psi\rangle_b$, the relation
    \begin{equation}\label{}
      X_d |+_s\rangle |\psi\rangle_b =|+_s\rangle X_b |\psi\rangle_b
    \end{equation}
    means the on-site operator $X_d \otimes \mathds{1}_d$ is equivalent to
    $\mathds{1}_s \otimes X_b \otimes \mathds{1}_d$.
    We can further find the projective representation as
    $v(g)=X_b$ and $v(h)=Z_b^a$.
    As $X_b Z_b^a =e^{i 2\pi a/b} Z_b^a X_b$, and $e^{i 2\pi a/b}=e^{i 2\pi x/d}$, we find
    $v(g) v(h)=e^{i 2\pi x/d} v(h) v(g)$,
    which means the integer $x$ labels the SPT order of $|C_d(x,y)\rangle$
    protected by symmetry $\mathbb{Z}_d\times \mathbb{Z}_d$.
\end{proof}

For MBQC, the virtual space of $|C_{sb}(sa,sb-sa)\rangle$
is from that of $|C_b(a,b-a)\rangle$,
as a result, the computational power
of $|C_{sb}(sa,sb-sa)\rangle$
is equivalent to that of $|C_b(a,b-a)\rangle$.
As $\text{gcd}(a,b)=1$,
the state $|C_b(a,b-a)\rangle$ itself lives in the MNC $\mathbb{Z}_b\times \mathbb{Z}_b$ SPT phase
labeled by the integer $a$,
and as well the SPT order of $|C_b(b-a,a)\rangle$ is labeled by $b-a$.

\subsection{Universality for qudit gates}

In this section we discuss the scheme to implement gates,
and we find that the computational power of a qudit cluster state is indicated by its bond dimension.
Our method is to use local unitary operations to convert $|C_d(x,y)\rangle$ to $|C_d(1,1)\rangle$,
the state generated by uniform application of $S^1$,
for which the MBQC scheme has been established before~\cite{Cla06,ZZX+03}.
\begin{proposition}\label{prop:clustgate}
  Cluster states $|C_d(x,y)\rangle$ have bond dimension $b=d/\text{gcd}(x,d)$,
  and the group of gates that can be simulated in MBQC on such states is $SU(b)$.
\end{proposition}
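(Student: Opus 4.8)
The plan is to prove the two assertions separately and to reduce the general case to the maximally noncommutative (MNC) case. For the bond dimension, set $\text{gcd}(x,d)=s$ and write $d=sb$, $x=sa$ with $\text{gcd}(a,b)=1$. Equation~(\ref{eq:b}) presents $|C_d(x,y)\rangle$ as $|C_b(a,b-a)\rangle\otimes|+_s\rangle^{\otimes N}$; the product factor carries no correlation across any cut, so the bond dimension of $|C_d(x,y)\rangle$ equals that of $|C_b(a,b-a)\rangle$, which is $b$ because $a$ is invertible modulo $b$ and the Kraus operators read off from Eq.~(\ref{eq:qudpro1}) then act irreducibly on the full $b$-dimensional correlation space. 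This gives $b=d/\text{gcd}(x,d)$ and, as remarked after Proposition~\ref{prop:clustspt}, shows the inert $s$-dimensional factor evolves trivially, so that MBQC on $|C_d(x,y)\rangle$ coincides with MBQC on the MNC state $|C_b(a,b-a)\rangle$.

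Next I would convert $|C_b(a,b-a)\rangle$ into the standard qudit cluster state $|C_b(1,1)\rangle$ by on-site unitaries. Because $\text{gcd}(a,b)=1$, the multiplication permutation $M_a:|m\rangle\mapsto|am \bmod b\rangle$ is unitary, fixes $|+_b\rangle$, and obeys $M_a Z^a M_a^\dagger=Z$; conjugating the target qudit of each entangling gate $S^a$ by $M_a$, and each $S^{b-a}$ analogously (using $\text{gcd}(b-a,b)=1$), turns it into $S^1$. A suitable site-dependent product of such local permutations therefore carries $|C_b(a,b-a)\rangle$ to $|C_b(1,1)\rangle$, merely relabelling the local measurement bases, so the two states simulate the same group of gates; the universal scheme for $|C_b(1,1)\rangle$ is available in Refs.~\cite{Cla06,ZZX+03}.

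Finally I would confirm that the induced gates exhaust $SU(b)$ rather than a proper subgroup. By the teleportation relation generalizing Eq.~(\ref{eq:teleport}), measurements in a basis obtained by composing the Fourier operator $F$ with a continuous diagonal unitary $\Lambda(\vec\theta)=\mathrm{diag}(e^{i\theta_0},\dots,e^{i\theta_{b-1}})$ induce, up to Heisenberg-Weyl byproduct, the virtual gate $F\,\Lambda(\vec\theta)$. The diagonal unitaries generate the maximal torus of $SU(b)$, whose Lie algebra is spanned by the powers $Z^k$, while $F$ conjugates $Z^k$ into the shift operators $X^k$ through Eq.~(\ref{eq:fk}); since $\{X^jZ^k\}$ is a complete operator basis, the brackets of these two abelian directions close on all of $\mathfrak{su}(b)$, so the generated group is exactly $SU(b)$. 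The main obstacle is precisely this last point — pinning down which group the continuous family of measurement-induced operators generates — together with verifying that the byproducts $X^sZ^t$ recorded at each outcome commute past the subsequent gates via Eq.~(\ref{eq:hwc}) and can therefore be corrected at the end, making the scheme deterministic over $SU(b)$.
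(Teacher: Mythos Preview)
Your proof is correct and follows essentially the same route as the paper: reduce the non-coprime case to the MNC state $|C_b(a,b-a)\rangle$ via Eq.~(\ref{eq:b}), convert that state to $|C_b(1,1)\rangle$ by on-site unitaries, and then invoke the known universality of $|C_b(1,1)\rangle$. Your multiplication permutations $M_a$ are in fact the very operators the paper uses, since $F_lF_k|\ell\rangle=|{-l^{-1}k\,\ell}\rangle$, i.e.\ $F_{lk}=M_{-l^{-1}k}$; the paper simply writes out the explicit alternating-site patterns (Eqs.~(\ref{eq:qudpro2}) and~(\ref{eq:qudpro3})) where you say ``a suitable site-dependent product,'' and it cites Ref.~\cite{Cla06} for the $SU(b)$ universality that you instead sketch via the Lie-algebra closure argument.
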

\begin{proof}
For the case $\text{gcd}(x,d)=1$, the bond dimension is $b=d$.
We first show that cluster states with the same bond dimension are locally unitarily equivalent.
Denote $F_{lk}\equiv F_l F_k$ for $\text{gcd}(l,d)=1$ and $\text{gcd}(k,d)=1$,
and from the following property
\begin{align}\label{eq:Flk}
  F_{lk} Z^{-k} F_{lk}^\dagger = Z^{l}, \;  F_{lk} X^{-l} F_{lk}^\dagger = X^{k},
\end{align}
we find
\begin{subequations}\label{eq:qudpro2}
\begin{alignat}{4}
F_{ax}\; \mathds{1}\; F_{ax}\; \mathds{1} \cdots |C_d(y,x)\rangle &= |C_d(a,b)\rangle,\\
\mathds{1}\; F_{ax}\; \mathds{1} \; F_{ax}\;\cdots |C_d(y,x)\rangle &= |C_d(a,b)\rangle,\\
F_{ax}\; \mathds{1}\; F_{ax}\; \mathds{1} \cdots |C_d(x,y)\rangle &= |C_d(b,a)\rangle,\\
\mathds{1}\; F_{ax}\; \mathds{1} \; F_{ax}\;\cdots |C_d(x,y)\rangle &= |C_d(b,a)\rangle,
\end{alignat}
\end{subequations}
which tells us that qudit cluster states $|C_d(x,y)\rangle$ and $|C_d(a,b)\rangle$
with the same bond dimension $d$
can undergo SPT phase transitions to each other driven by local unitary operators $F_{ax}$,
which obviously do not respect the $\mathbb{Z}_d\times \mathbb{Z}_d$ symmetry.
With the permutation operator $\Pi$~(\ref{eq:permutation}) and its property
\begin{equation}\label{}
  \Pi Z^j \Pi = Z^{-j}, \; \Pi X^j \Pi = X^{-j},
\end{equation}
as a special case of Eq.~(\ref{eq:Flk}),
we find
\begin{subequations}
\begin{alignat}{2}\label{eq:qudpro3}
  \mathds{1}\; \mathds{1}\; \Pi\;  \Pi\; \cdots |C_d(x,d-x)\rangle & = |C_d(x,x)\rangle, \\
  \Pi\;  \mathds{1}\; \mathds{1}\; \Pi\; \cdots |C_d(d-x,x)\rangle &= |C_d(x,x)\rangle.
\end{alignat}
\end{subequations}
As the result, we see that a qudit cluster state $|C_d(x,y)\rangle$
can be locally unitarily converted to $|C_d(1,1)\rangle$.

Then we employ the MBQC scheme on $|C_d(1,1)\rangle$~\cite{Cla06,ZZX+03},
namely, given a hermitian operator basis $\{N_\ell\}$ for $\mathcal{B}(\mathcal{H}_d)$,
any qudit gate takes the form
\begin{equation}\label{eq:d-gate}
  U=\prod_\ell e^{i2\pi\beta_\ell N_\ell/d}, \beta_\ell\in \mathbb{R}.
\end{equation}
It's been proved that~\cite{Cla06},
the Fourier transform $F$ together with diagonal phase gates $D(\vec{c})$,
for $\vec{c}=(1,c_1,\dots,c_{d-1})$ of $d-1$ complex numbers of modulus one,
are a universal set of gates for single qudit computation.
From teleportation, similar with the qubit case,
a gate $D(\vec{c})$ can be simply implemented.

For the case $\text{gcd}(x,d)\neq 1$, the state that can be used in MBQC is
$|C_b(a,b-a)\rangle$ from Eq.~(\ref{eq:b}).
It is clear now it is universal for gates in $SU(b)$.
\end{proof}

In addition, it is worth mentioning the transition between
$|C_d(x,y)\rangle$ and $|C_d(y,x)\rangle$, which is
\begin{subequations}\label{eq:qudpro0}
  \begin{alignat}{2}
    \Pi\; \mathds{1}\; \Pi\; \mathds{1} \cdots |C_d(x,y)\rangle &= |C_d(y,x)\rangle,\\
\mathds{1}\; \Pi\; \mathds{1}\;  \Pi\; \cdots |C_d(x,y)\rangle &= |C_d(y,x)\rangle,
  \end{alignat}
\end{subequations}
as a special case of Eq.~(\ref{eq:qudpro2}).
This shows that states $|C_d(x,y)\rangle$ and $|C_d(y,x)\rangle$
are equivalent up to some permutations of local bases.
This also means for $d=2x$, the on-site operations
$\Pi\otimes \mathds{1}$ and $\mathds{1}\otimes  \Pi$ generate a $\mathbb{Z}_2\times \mathbb{Z}_2$ symmetry,
hence state $|C_d(d/2,d/2)\rangle$ has $\mathbb{D}_d\times \mathbb{D}_d$ symmetry
for $\mathbb{D}_d=\langle x, y| x^d=y^2=(xy)^2=1\rangle$.
In the following we show some examples to illustrate our general results.

\begin{example}
For $d=4$, we can define cluster states
$|C_4(1,3)\rangle$, $|C_4(3,1)\rangle$,
$|C_4(1,1)\rangle$, $|C_4(3,3)\rangle$, and $|C_4(2,2)\rangle$.
Among them, states $|C_4(1,3)\rangle$, $|C_4(3,1)\rangle$,
$|C_4(1,1)\rangle$, and $|C_4(3,3)\rangle$ can translate into each other
by proper uses of the permutation $\Pi$.
Eventually, all of them can end up with the state $|C_4(1,1)\rangle$,
which is universal for two-qubit computation.
The state $|C_4(2,2)\rangle$ can reduce to $|C_2(1,1)\rangle$ and an additional $|+_2\rangle$ for each site, hence is universal for single qubit computation.

For $d=5$, we can define cluster states
$|C_5(1,4)\rangle$, $|C_5(4,1)\rangle$,
$|C_5(2,3)\rangle$, $|C_5(3,2)\rangle$,
$|C_5(1,1)\rangle$, $|C_5(2,2)\rangle$, $|C_5(3,3)\rangle$, and $|C_5(4,4)\rangle$.
Among them, $|C_5(1,4)\rangle$ can translate into $|C_5(2,3)\rangle$
by proper uses of $F_{13}$,
while following similar methods,
other transitions can be driven by proper uses of the permutation $\Pi$ and $F_{13}$.
Eventually, all of them can end up with the state $|C_5(1,1)\rangle$ for the purpose of computation.
In this case, all states have bond dimension five.\hfill
$\blacksquare$
\end{example}

\section{MBQC on AKLT-type states}
\label{sec:gaklt}

\begin{table}[t]
  \centering
  \footnotesize
  \caption{Computational properties of higher-symmetry generalizations of AKLT states.}
  \begin{tabular}{|c||c|c|c|c|}
    \hline
     & $SU(N)$  & $SO(2\ell+1)$  & $SO(2\ell+1)$  & $Sp(2n), n=2^m$ \\ \hline \hline
     On-site irrep & $N^2-1$ &  $2\ell+1$ & $\ell(2\ell+1)$ & $n(2n+1)$ \\ \hline
     Virtual irrep & $({\bm N},\bar{{\bm N}})$ &  $2^\ell$ & $2^\ell$ & $2n$ \\ \hline
     Encoding & qunit &  $\ell$ qubits & $\ell$ qubits & $m+1$ qubits \\ \hline
    Byproduct & $X^iZ^j$  & Pauli & Pauli &  Pauli \\ \hline
    Gate set & $SU(N)$ & $SO(2\ell+1)$ & $SO(2\ell+1)$ & $Sp(2n)$ \\ \hline
    Suc. prob. & $\frac{N}{N+1}\rightarrow 1$ & $\frac{2}{2\ell+1} \rightarrow 0$ & $\frac{2(2\ell-1)}{\ell(2\ell+1)} \rightarrow 0$ &
     $\frac{n+1}{2n+1} \rightarrow \frac{1}{2}$ \\ \hline
  \end{tabular}
  \label{tab:akltstates}
\end{table}

The computation scheme based on the spin-1 $SO(3)$ invariant AKLT state can be generalized.
Here we consider AKLT models with unitary $SU(N)$,
orthogonal $SO(2\ell+1)$, and symplectic $Sp(2n)$ symmetries.
Note there are some subtleties of the actual symmetries,
namely, these symmetries are actually
projective unitary $PSU(N)$, projective spin $PSpin(2\ell+1)$,
and projective symplectic $PSp(2n)$ symmetries.
The reasons will be clear later on when those states are studied,
and for simplicity most of the time our simplified terminology is employed.
The universal resource states we identified are summarized in Table~\ref{tab:akltstates},
displaying the on-site irrep,
virtual space irrep, logical encoding, byproduct, gate set that can be performed,
and success probability for gates as well as projections for initialization or readout.

Compared with qudit cluster states, there are the following two primary differences:
i) The success probability of gates is smaller than one for AKLT-type states,
while it is always one for cluster states.
This also occurs for the spin-1 AKLT state, wherein the probability is $2/3$.
An identity gate (up to byproduct) is executed
if a gate or projection on the virtual space is not successful,
and then one can repeat a measurement on the next physical site
until the desired operation is performed.
On average, this yields an increase of the length of the chain by a factor
of the inverse of the success probability.
ii) The gate set that can be induced by symmetry is the same as the given symmetry group,
different from the case of cluster states.
The dimension of the on-site physical system is not the same as the bond dimension,
while for cluster states they are the same.

In the following, we study the computational universality of various AKLT-type states.
Our strategy is to demonstrate the universality directly,
instead of employing the so-called state reduction scheme~\cite{CDJ+10},
which is to reduce the state to a cluster state (or graph state)
by an efficient procedure~\cite{WAR11,WAR12,WHR14,WR15}.
Our method can reveal the properties of AKLT-type states more clearly.

\subsection{MBQC on \texorpdfstring{$SU(N)$}{Lg} AKLT states}
\label{sec:su-n-aklt}

\subsubsection{Construction of states}

We start from the SPT phase classification,
and then we will construct $SU(N)$ AKLT states that belong to different SPT phases.
The classification of 1D bosonic
SPT phases with continuous symmetry, i.e., on-site Lie group symmetry $G$,
has been established~\cite{DQ13a}, and it has been shown
\begin{equation}\label{}
  H^2(G_\Gamma,U(1))\cong \Gamma=\pi_1(G_\Gamma),
\end{equation}
for $G_\Gamma=G/\Gamma$ and a central subgroup $\Gamma\subset Z(G)$, the center of $G$,
and $\pi_1(G_\Gamma)$ is the fundamental group of $G_\Gamma$.
Different from finite group cases, here a nontrivial $\Gamma$ plays a central role for
the classification of SPT phases.
In this subsection we are interested in the group $SU(N)$ and $\Gamma=\mathbb{Z}_N$,
and the group $SU(N)/\mathbb{Z}_N$ is known as projective $SU(N)$, denoted by $PSU(N)$.
Therefore, $PSU(N)$ invariant MPS are classified to be in $N$ phases according to
\begin{equation}\label{}
  H^2(PSU(N),U(1))=\mathbb{Z}_N,
\end{equation}
with one trivial and $N-1$ nontrivial.
Note that we only consider on-site symmetry (without global symmetry such as time reversal).
As an indicator for SPT phases,
it has been proposed to use the congruence class
\begin{equation}\label{eq:cong}
  [\lambda]:= \sum_{i=1}^r \lambda_i v_i,\; \text{mod} \; M, \; [\lambda] \in \mathbb{Z}_N
\end{equation}
of any irreducible representation $\lambda=(\lambda_1,\dots,\lambda_r)$,
written by $r$-tuples of integrable weights,
for the congruence vector $v=(v_i)$ and integer $M$ determined by the group~\cite{DQ13a}.
For the purpose of MBQC,
states with OBC are preferred since it is required for the input and output of the computation.
For $SU(N)$ AKLT states the left and right edge states can be different,
so in general an arbitrary phase can be labeled by
\begin{equation}\label{}
  ([\lambda_\textsc{l}],[\lambda_\textsc{r}])=(x,N-x),\; 1\leq x\leq N,
\end{equation}
hence just by a single integer $x$.
One may notice an appealing similarity with the case of qudit cluster states,
wherein $x$ represents the order in the entangling gate $S^x$~(\ref{eq:Sx}).
Here in particular, the case $x=y=N/2$ means the irrep of $SU(N)$ on the virtual space is real,
and therefore, the state does not break the spatial inversion symmetry.
The case $x=N$ indicates the trivial phase.
In the following we study states with on-site adjoint irrep,
while the virtual space representation can be different,
e.g., the fundamental irrep or other representations.
The adjoint irrep of $SU(N)$
is a faithful representation of $PSU(N)$,
so the actual symmetry is $PSU(N)$ instead of $SU(N)$.
For instance, it is well identified that
the spin-1 AKLT model has $SO(3)$ symmetry instead of $SU(2)$~\cite{CGL+13,SPC11}.

Given the on-site adjoint irrep, usually denoted as $\bm{N^2-1}$,
we define a state by firstly choosing a value for $x$,
and then picking the irrep $\lambda_\textsc{l}$ ($\lambda_\textsc{r}$)
with the smallest dimension such that $x=[\lambda_\textsc{l}]$ ($N-x=[\lambda_\textsc{r}]$).
For even $N$, $x$ can be chosen to be $1,2,\dots,N/2$,
and for odd $N$, $x$ can be chosen to be $1,2,\dots,(N-1)/2$.
Each resulting AKLT-type state is the ground state of a two-body Hamiltonian
\begin{equation}\label{eq:parent-h}
  H=\sum_i \prod_j \left[ (T_i+T_{i+1})^2-C(\lambda_j) \right]
\end{equation}
with
\begin{equation}\label{}
  \lambda_j \in \{\lambda_\textsc{l}\otimes \lambda_\textsc{r}\}\cap\{(\bm{N^2-1})\otimes(\bm{N^2-1})\}
\end{equation}
constructed via the standard valence bond state (VBS) method~\cite{AKLT88},
and $T_i=(T^a_i)$ ($a=1,\dots,N^2-1$) are $su(N)$ generators in the adjoint representation on site $i$,
$C(\lambda_j)$ is the eigenvalue of $\sum_a (T^a)^2$ for irrep $\lambda_j$~\cite{MUM+14}.

Some $SU(N)$ AKLT states for $N\geq 3$ have been constructed~\cite{GR07,KHK08,MUM+14},
namely, in the VBS picture,
each site starts from a fundamental representation ${\bm N}$ and its conjugate $\bm{\bar{N}}$,
and then they are projected to the adjoint representation $\bm{N^2-1}$
based on the decomposition
\begin{equation}\label{}
\bm{N}\otimes \bm{\bar{N}}=(\bm{N^2-1})\oplus \bm{1},
\end{equation}
and the nearest pair $\bm{N}$ and $\bm{\bar{N}}$ from two neighboring sites
are projected to the singlet part $\bm{1}$.
This leads to two nontrivial SPT states that can be labeled by the pair of their edge modes
$(\bm{N}, \bm{\bar{N}})$ and $(\bm{\bar{N}}, \bm{N})$, respectively,
which relate to each other by a spatial inversion.
The congruence class of $\bm{N}$ is always $1$,
and of $\bm{\bar{N}}$ is always $N-1$,
hence the SPT orders of the two AKLT-type states are labeled by $(1,N-1)$ and $(N-1,1)$.

\subsubsection{Resource states for MBQC}
\label{subsec:vir-fund}

In this section we identify resource $SU(N)$ AKLT states that can be defined above for MBQC.
We obtain the following proposition.
\begin{proposition}\label{prop:sun}
  For $SU(N)$ AKLT state with on-site irrep $\bm{N^2-1}$ and virtual irrep $(\bm{N},\bm{\bar{N}})$,
  the gates on the virtual space induced by on-site projective measurements form $SU(N)$
  with success probability $p=N/(N+1)$ and unitary byproduct $X^iZ^j$.
\end{proposition}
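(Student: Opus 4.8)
The plan is to realize the $SU(N)$ AKLT state explicitly as a translation-invariant MPS whose tensors are a normalized basis of the traceless $N\times N$ matrices, and then to read off the computation directly, in close parallel with the spin-1 analysis of Eqs.~(\ref{eq:Uaklt})--(\ref{eq:out-aklt}). First I would choose, as the on-site basis for the adjoint irrep $\bm{N^2-1}$, the nonidentity Heisenberg-Weyl operators, setting $A_{(i,j)}=\tfrac{1}{\sqrt{N^2-1}}X^iZ^j$ with $(i,j)\in\mathbb{Z}_N\times\mathbb{Z}_N\setminus\{(0,0)\}$. These are orthonormal under the trace inner product, span the traceless matrices, and satisfy $\sum_{(i,j)}A_{(i,j)}^\dagger A_{(i,j)}=\mathds{1}$, so they define a genuine channel on the $N$-dimensional virtual space. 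Because each $A_{(i,j)}$ is proportional to a unitary, a standard-basis measurement with outcome $(i,j)$ enacts the Kraus operator $X^iZ^j$ on the correlation space, which immediately gives the unitary byproduct $X^iZ^j$ claimed in the statement.

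Next I would verify the symmetry condition~(\ref{eq:symcond}) with $U(g)$ the adjoint representation acting on the physical index and $V(g)$ the fundamental representation acting on the virtual space: since the adjoint action is $M\mapsto gMg^\dagger$ on traceless matrices, one has $V(g)^\dagger A_{(i,j)}V(g)=\sum_{(k,l)}u_{(i,j)(k,l)}(g)A_{(k,l)}$, which is exactly Eq.~(\ref{eq:symcond}) and shows the actual on-site symmetry is $PSU(N)$. Consequently, measuring a site in a basis rotated by the adjoint element $U(g)$ enacts the conjugated operator $V(g)^\dagger X^iZ^j V(g)$ on the correlation space, just as $R_x$ induced $V(g)=X(\alpha)$ in the spin-1 case. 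Choosing $g$ along a one-parameter subgroup $V(\theta)=e^{-i\theta H}$ and pulling the random byproduct to the left, the enacted gate factorizes as $(X^iZ^j)\,\big[(X^iZ^j)^\dagger V(\theta)^\dagger (X^iZ^j)V(\theta)\big]$; when the byproduct anticonjugates $H$, i.e.\ $(X^iZ^j)^\dagger H (X^iZ^j)=-H$, the bracket collapses to the doubled rotation $e^{-2i\theta H}$ (the \emph{success} branch), while when it commutes with $H$ the bracket is trivial and only the byproduct survives (the \emph{failure} branch), generalizing the spin-1 dichotomy between $X(2\alpha)$ and $\mathds{1}$.

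For the success probability I would note that, because every $A_{(i,j)}$ and every rotated $\tilde A_{(i,j)}=V(g)^\dagger A_{(i,j)}V(g)$ is proportional to a unitary, the outcome probabilities $\|\tilde A_{(i,j)}|\psi\rangle\|^2=\tfrac{1}{N^2-1}$ are uniform and independent of the unknown correlation-space state; equivalently, the transfer channel is of depolarizing type with maximally mixed fixed point. The failure outcomes are exactly the byproducts commuting with the chosen diagonal rotation generator, namely the $N-1$ nonidentity diagonal operators $Z^1,\dots,Z^{N-1}$, while the remaining $N^2-N$ outcomes lie in the success branch. Hence $p=\frac{N^2-N}{N^2-1}=\frac{N}{N+1}$, which correctly reduces to $2/3$ at $N=2$.

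Finally, for universality I would argue that the symmetry-induced gates generate all of $SU(N)$: since $V(g)$ ranges over the full fundamental representation as $g$ ranges over $PSU(N)$, letting $H$ run over a generating family of $su(N)$ produces every needed one-parameter subgroup up to byproduct, and together with the Heisenberg-Weyl byproducts one recovers a universal set, e.g.\ the Fourier transform plus diagonal phase gates, exactly the generators used for the qudit cluster state in Proposition~\ref{prop:clustgate} via Ref.~\cite{Cla06}. The main obstacle is precisely this step: the clean sign-flip $(X^iZ^j)^\dagger H (X^iZ^j)=-H$ holds only for special $H$ when $N>2$, so for a generic generator the success branch gives the conjugated operator $V(\theta)^\dagger X^iZ^j V(\theta)$ rather than a clean doubling, and one must both identify enough generators admitting a usable reduction and manage byproduct propagation through the continuous gates by adaptive (feed-forward) choice of later measurement bases so that all random byproducts remain correctable Heisenberg-Weyl operators at the output. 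Establishing that the resulting adaptively corrected gate set is exactly $SU(N)$, rather than a proper subgroup, is the crux of the argument; the normalization, symmetry, and counting steps sketched above are then routine.
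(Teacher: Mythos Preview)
Your overall architecture matches the paper's: Heisenberg--Weyl operators as the Kraus operators and byproducts (the paper arrives there from the Gell--Mann matrices via a unitary basis change, you start there directly), the symmetry condition~(\ref{eq:symcond}) with adjoint on-site and fundamental virtual reps, rotated-basis measurements inducing $V^\dagger X^iZ^j V$ on the correlator, and the commute/not-commute dichotomy yielding the count $(N^2-N)/(N^2-1)=N/(N+1)$. Your observation that the outcome probabilities are uniform, $1/(N^2-1)$, is also the paper's.

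The gap you correctly flag but do not close is exactly where the paper supplies a concrete device. For $N>2$ a Heisenberg--Weyl byproduct never anticommutes with a generic $H\in su(N)$, so your ``doubled rotation'' branch fails precisely where you say it does. The paper's resolution is to take as generators not arbitrary $H$ but the rank-one projectors $|\lambda_\ell\rangle\langle\lambda_\ell|$ onto eigenstates of the mutually-unbiased-basis operators $\{Z,X,XZ,\dots,X^{d-1}Z\}$, Eqs.~(\ref{eq:mub})--(\ref{eq:d-gate2}). Each $V_\ell=e^{i\theta_\ell|\lambda_\ell\rangle\langle\lambda_\ell|}$ is then a polynomial in a single Heisenberg--Weyl operator $N_\ell$; a byproduct $X^iZ^j$ either commutes with $N_\ell$ (the $d-1$ failure cases, exactly the nontrivial powers of $N_\ell$) or cyclically permutes the eigenstates of $N_\ell$, sending $|\lambda_\ell\rangle$ to another eigenstate $|\lambda_\ell+k\rangle$ of the \emph{same} operator. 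In the success branch the induced gate is therefore again a diagonal phase gate in that same MUB basis, so the outcome-dependent residual can be absorbed into adaptive choices of the later $\theta_\ell$'s; this is what replaces the $N=2$ sign-flip mechanism. Universality of this generating family is precisely the Fourier-plus-diagonal-phases result of Ref.~\cite{Cla06} that you already invoke, so with this specific decomposition your ``proper subgroup'' worry disappears.
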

\noindent Note this proposition also holds for the state obtained by its spatial inversion.
The proof will be presented at the end of this section after the following analysis.

\emph{Byproduct.}
We start from identifying the byproduct for state labeled by $(\bm{N}, \bm{\bar{N}})$.
As the on-site dimension is $N^2-1$,
in the MPS form there are $N^2-1$ Kraus operators at each site,
and they are site-independent.
From the property of $su(N)$ algebra,
these Kraus operators can be chosen as the generators,
i.e., generalized Gell-Mann matrices (with formula shown in Eq.~(\ref{eq:Gell})), of $su(N)$ algebra
for state $(\bm{N}, \bm{\bar{N}})$,
and the transpose of the generators for $(\bm{\bar{N}}, \bm{N})$.
For the purpose of computation, however,
Gell-Mann matrices are hermitian yet not unitary,
which would not lead to unitary byproduct.
Despite this, we find that there exists a basis and a basis transformation such that
the set of Gell-Mann matrices, denoted by $\{M_\beta\}$, can be transformed
to the set of Heisenberg-Weyl matrices, denoted by $\{W_\alpha\}$,
\begin{equation}\label{}
  \sum_\beta u_{\alpha\beta} M_\beta = W_\alpha
\end{equation}
for $W_\alpha:=X^iZ^j$ with $\alpha=i+dj$,
and $U=(u_{\alpha\beta})$ is a unitary operator.
This is based on the fact that both $\{M_\beta\}$ and $\{W_\alpha\}$
are traceless orthonormal basis.
The Heisenberg-Weyl operators are unitary and will basically be treated as byproducts in MBQC.

\emph{MPS circuit.}
The quantum circuit to represent an $SU(N)$ AKLT state
can be achieved by a direct generalization of
the circuit to generate the spin-1 AKLT state described in section~\ref{subsec:1aklt}.
Namely, each local subsystem is now $N$-dimensional,
the initial state $|+_3\rangle$ is generalized to be $|+_N\rangle$,
the generalized uniformly-controlled gate now becomes
\begin{equation}\label{}
  U= \sum_{ij} P_{ij}\otimes X^i Z^j,
\end{equation}
and the measurement outcomes $s_1s_2$ correspond to the gate
$X^{s_1}Z^{s_2}$ (except $X^0Z^0=\mathds{1}$)
on the input state $|\text{in}\rangle$.
The qudit version of swap gate can also be directly defined.
In addition, the equivalence between the virtual picture and real picture still holds apparently.

\emph{Gates for computation.}
Furthermore, nontrivial gates are induced by projective measurements in rotated bases,
similar with the cases of qudit cluster states and also spin-1 AKLT state.
To find proper rotated bases for nontrivial gates,
we employ the method develop in Refs.~\cite{ZZX+03,Cla06}.
Let's denote $N\equiv d$ and consider prime $d$ for simplicity
(other cases have also been settled),
the basis $\{N_\ell\}$ in equation~(\ref{eq:d-gate}) is chosen to be a set of $d^2$ hermitian operators
that come from the eigenstates of Heisenberg-Weyl operators
\begin{equation}\label{eq:mub}
  \{Z,X,XZ,X^2Z,\dots,X^{d-1}Z\},
\end{equation}
and the gates that need to be executed on the virtual space take the form
$e^{i\theta |\lambda\rangle\langle\lambda|}$
for $|\lambda\rangle$ as an eigenstate of one operator in the set~(\ref{eq:mub}).
Then an arbitrary qudit gate on the virtual space can be written as
\begin{equation}\label{eq:d-gate2}
  V=\prod_\ell e^{i\theta_\ell |\lambda_\ell\rangle\langle\lambda_\ell|}:=\prod_\ell V_\ell
\end{equation}
for $V_\ell:=e^{i\theta_\ell |\lambda_\ell\rangle\langle\lambda_\ell|}$.
As a result, we only need to implement $d^2$ gates
$e^{i\theta_\ell |\lambda_\ell\rangle\langle\lambda_\ell|}$,
each of which is induced by a projective measurement in a rotated basis by an operator $U_\ell$
acting on the on-site physical system.
Furthermore, each operator $U_\ell$ can be readily found,
since for an arbitrary operator $V\in SU(d)$ in the fundamental representation,
the corresponding operator $U$ in the adjoint representation can be found with its matrix elements
\begin{equation}\label{eq:adjunitary}
  u_{ij}=\text{tr}(W_i^\dagger VW_j V^\dagger)/d.
\end{equation}
This is based on the fact that,
the virtual system state $\rho$ can be expanded in the basis $\{W_j\}$ as
\begin{equation}\label{}
  \rho=(\mathds{1}+\sum_j r_j W_j)/d,
\end{equation}
which after the action of $V$ becomes
\begin{align}\label{}\nonumber
  \rho_f=V\rho V^\dagger & =\frac{1}{d}(\mathds{1}+\sum_j r_j V W_j V^\dagger) \\
  &= \frac{1}{d}(\mathds{1}+\sum_{ij} r_j \text{tr}( W_i^\dagger V  W_j V^\dagger ) W_i).
\end{align}
By defining $s_i:=\sum_{j} r_j \text{tr}( W_i^\dagger V  W_j V^\dagger )$ we have
$\rho_f=(\mathds{1}+\sum_{i} s_i W_i)/d$.
The operator $U$ in the adjoint representation maps the vector $\bm{r}:=(r_j)$ to the vector $\bm{s}:=(s_i)$.

After a sequence of $d^2$ projective measurements in rotated bases each specified by $U_\ell$,
given input state $|\text{in}\rangle$, the final state on the virtual space is
\begin{equation}\label{eq:out-akltn}
  |\text{out}\rangle=\prod_\ell V_\ell^\dagger X^{i_\ell}Z^{j_\ell} V_\ell |\text{in}\rangle,
\end{equation}
as a generalization of Eq.~(\ref{eq:out-aklt}).

\emph{Byproduct propagation and success probability.}
In Eq.~(\ref{eq:out-akltn}) the operators $X^{i_\ell}Z^{j_\ell}$ are byproduct
and need to be pulled out.
If $|\lambda_\ell\rangle$ is an eigenstate of operator $N_\ell$ from the set~(\ref{eq:mub}),
then it is straightforward to find the following decompositions
\begin{equation}\label{}
  V_\ell=\sum_{s=0}^{d-1} v_s N_\ell^s=v_0 \mathds{1}+ v_1 N_\ell + \cdots,
\end{equation}
for proper coefficients $v_s\in \mathbb{C}$.
From this, we observe that the probability for
$V_\ell$ commuting with $X^{i_\ell}Z^{j_\ell}$ is $\frac{d-1}{d^2-1}$,
hence the probability for successfully inducing a nontrivial gate is
\begin{equation}\label{eq:sucaklt}
  \mathfrak{p}_\text{suc.}=\frac{d}{d+1},
\end{equation}
which includes the spin-1 AKLT state as a special case, i.e., for $d=2$.
As pointed out before, for gate computation
this can be dealt with by increasing the length of the chain by a factor of $(d+1)/d$.

\emph{Initialization and Readout (projection on the virtual space).}
For readout and also preparation in the virtual space
we need to choose a basis such that some Kraus operators become projectors.
We observe that a projector can be made from linear combination of Heisenberg-Weyl operators.
This can be easily found in a rotated basis by
\begin{equation}\label{eq:adjproj}
  U=\text{diag}[F,F,\dots,F,\mathds{1}_{d-1}]
\end{equation}
with ($d-1$)-fold Fourier operator $F$ defined in Eq.~(\ref{eq:fk}),
and the resulting operators are the set
\begin{equation}\label{}
\{X^iP_j, Z^k\}
\end{equation}
for $i,k=1,\dots,d-1$, $j=0,\dots,d-1$.
We can see that the success probability of projection is also $\frac{d}{d+1}$ (up to byproduct $X^i$),
the same with the success probability~(\ref{eq:sucaklt}) of inducing nontrivial gate.
Now we can prove the main proposition in this section.
\begin{proof}[Proof of Prop.~\ref{prop:sun}]
  From Eq.~(\ref{eq:d-gate2}) and~(\ref{eq:out-akltn}) the gates that can be performed form the whole $SU(N)$ group,
  and it is clear that the byproduct are $X^iZ^j$.
  The success probability for nontrivial gate or projection is from Eq.~(\ref{eq:sucaklt}).
\end{proof}

Furthermore, we have seen that
there are $N$ SPT phases for $SU(N)$ symmetry and corresponding AKLT states,
however, we only identify two AKLT states for each $N$ as resource for MBQC.
We have evidence that other types of AKLT states may not be resource directly,
as shown by the following examples,
but it is possible that they could also be turned into resource
after a certain manipulation
or by employing a certain generalized notion of universality for MBQC~\cite{RWP+16,SWP+16}.

\begin{example}
For the $SU(3)$ AKLT model with on-site $\bm{8}$ irrep,
two AKLT states are defined by using
$(\bm{3},\bar{\bm{3}})$ and $(\bar{\bm{3}},\bm{3})$ in the virtual space for each site,
and labeled by $(1,2)$ and $(2,1)$, respectively, for their SPT order.
It's been shown that~\cite{MUM+14} for the state ($\bm{3}$, $\bar{\bm{3}}$)
the eight Kraus operators on each site are
$I^\pm, V^\pm, U^\pm, \sqrt{2}T^3, \sqrt{2}T^8$,
for $I^\pm=T^1\pm iT^2$, $U^\pm=T^6\pm iT^7$, $V^\pm=T^4\pm iT^5$,
and $T^a$ are the eight Gell-Mann matrices (up to a constant factor) of $su(3)$ generators.
For computation, the unitary operator that converts the set of Gell-Mann matrices
to the set of Heisenberg-Weyl matrices is
$U=\text{diag}(F,F,E)$
for qutrit Fourier transform $F$ and a unitary operator
\begin{equation}\label{}
  E=\frac{1}{\sqrt{6}}\left(\begin{array}{cc}
      1-\omega^2 & 1-\omega \\
      -\sqrt{3}\omega & \sqrt{3}(1+\omega)
    \end{array}\right)
\end{equation}
and $\omega=e^{i2\pi/3}$.
For projections in the virtual space, given the set of Kraus operators
$\{X$, $XZ$, $XZ^2$, $X^2$, $X^2Z$, $X^2Z^2$, $Z$, $Z^2\}$,
we apply a unitary operator $U=\text{diag}[F,F,\mathds{1}_2]$
for the basis transformation,
obtaining the transformed Kraus operators as
$\{X^2P_0$, $X^2P_2$, $X^2P_1$, $XP_0$, $XP_2$, $XP_1$, $Z$, $Z^2\}$.
The success probability is $3/4$. \hfill $\blacksquare$
\end{example}

\begin{example}
For the $SU(4)$ AKLT model with on-site $\bm{15}$ irrep,
two AKLT states are defined by $(\bm{4},\bar{\bm{4}})$ and $(\bar{\bm{4}},\bm{4})$,
and their labels are $(1,3)$ and $(3,1)$, respectively.
It is clear that these states are universal.

Another AKLT state is defined by the edge mode $\bm{6}$, for which $\bm{6}=\bar{\bm{6}}$.
This state does not break the inversion symmetry,
and its label is $(2,2)$.
From $\bm{6}\otimes \bm{6}=\bm{1}\oplus \bm{15}\oplus \bm{20}'$,
and $\bm{15}\otimes \bm{15}=\bm{1}\oplus 2(\bm{15})\oplus \bm{20}'\oplus \bm{45}\oplus \bar{\bm{45}} \oplus \bm{84}$,
this state is the ground state of its parent Hamiltonian,
defined according to Eq.~(\ref{eq:parent-h}).
For the purpose of MBQC, however, the virtual space is of dimension six,
while the gates that can be induced from symmetry are elements in $SU(4)$.
The byproduct operators are the generators of $su(4)$ in the irrep $\bm{6}$.

For the $SU(5)$ AKLT model with on-site $\bm{24}$ irrep,
in addition to the two states defined by $(\bm{5},\bar{\bm{5}})$ and $(\bar{\bm{5}},\bm{5})$,
with labels $(1,4)$ and $(4,1)$, respectively,
another two states are defined by $(\bm{10},\bar{\bm{10}})$ and $(\bar{\bm{10}},\bm{10})$,
and their labels are $(2,3)$ and $(3,2)$, respectively.
The byproduct operators are the generators of $su(5)$ in the irrep $\bm{10}$.
\hfill $\blacksquare$
\end{example}

\subsection{MBQC on \texorpdfstring{$SO(N)$}{Lg} AKLT states}
\label{sec:so-n}

In this section we discuss another natural generalization of the spin-1 $SO(3)$ AKLT model,
namely, the $SO(N)$ generalizations.
According to SPT phase classification~\cite{HPC+12,DQ13a},
for odd $N=2\ell+1$, there are two phases since
\begin{equation}\label{}
  H^2(SO(2\ell+1),U(1))=\mathbb{Z}_2,
\end{equation}
which means there is only one nontrivial phase for any $\ell\in \mathbb{Z}$.
This is due to $SO(2\ell+1)=Spin(2\ell+1)/\mathbb{Z}_2$
and the fact that the center of $Spin(2\ell+1)$ is $\mathbb{Z}_2$~\cite{DQ13a}.
In the following we study two types of states in the nontrivial phase,
one type is defined with on-site fundamental representation,
and the other type is defined with on-site adjoint representation,
yet it turns out both can have virtual spinor irrep.
Some examples have been constructed before,
such as states with on-site fundamental representation~\cite{TZX08}
and on-site adjoint representation~\cite{SR08,TZX+09}.
Here we are interested in the computational property of those states in the nontrivial phase.

First we review the spinor irrep of $SO(2\ell+1)$ in terms of the
so-called Clifford matrices.
The standard procedure to construct Clifford matrices is recursive and described as follows~\cite{Ram10}.
For $SO(2\ell+1)$, the Clifford matrices are of dimension $2^\ell\times 2^\ell$.
For $\ell=1$, the Clifford matrices are the three Pauli matrices
\begin{equation}\label{}
  \Gamma_1=\sigma_1=X,\; \Gamma_2=\sigma_2=Y,\; \Gamma_3=\sigma_3=Z.
\end{equation}
Given the $2\ell-1$ Clifford matrices $\Gamma_i'$ for $\ell-1$,
the Clifford matrices for the case $\ell$ are
\begin{subequations}
\begin{alignat}{3}\label{}
  \Gamma_i & =\sigma_1\otimes \Gamma_i',\; i=1,2,\dots,2\ell-1, \\
  \Gamma_{2\ell} & =\sigma_2\otimes \mathds{1}, \\
  \Gamma_{2\ell+1} & =\sigma_3\otimes \mathds{1}.
\end{alignat}
\end{subequations}
As a generalization of Pauli operators, they satisfy the anti-commutation relation
\begin{equation}\label{}
  \{\Gamma_a, \Gamma_b\}=2\delta_{ab},\; a,b=1,2,\dots,2\ell+1,
\end{equation}
and the $SO(2\ell+1)$ generators in the spinor form are
\begin{equation}\label{eq:sogenerator}
  \Gamma^{ab}:=[\Gamma^a,\Gamma^b]/2i=-i\Gamma^a \Gamma^b, \; a\neq b.
\end{equation}

For even $N$, the $2^\ell$-dimensional spinor rep of $SO(2\ell+1)$ is reducible:
it can be split into two $2^{\ell-1}$-dimensional spinor irrep of $SO(2\ell)$.
The state with on-site fundamental representation and virtual spinor representation is dimerized~\cite{TZX08},
hence cannot serve as resource for MBQC as the odd $N$ case.

\subsubsection{On-site fundamental representation}
\label{subsec:so_fund}

We first consider states with on-site fundamental irrep,
which would simply be denoted by its dimension.
We obtain the following proposition.
\begin{proposition}\label{prop:son}
  For the $SO(2\ell+1)$ AKLT state with on-site irrep $2\ell+1$ and virtual irrep $2^\ell$,
  the gates on the virtual space induced by on-site projective measurements form $SO(2\ell+1)$
  with success probability $p=2/(2\ell+1)$ and byproducts as (tensor products of) Pauli operators.
\end{proposition}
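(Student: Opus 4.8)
The plan is to mirror the proof of Proposition~\ref{prop:sun} for the $SU(N)$ case, with the Clifford matrices $\Gamma_a$ playing the role that the Heisenberg--Weyl operators $W_\alpha=X^iZ^j$ played there, and the spinor--vector correspondence of $SO(2\ell+1)$ replacing the adjoint--fundamental correspondence of $SU(N)$. First I would identify the byproduct operators. Since the on-site irrep is the $(2\ell+1)$-dimensional vector representation, there are $2\ell+1$ site-independent Kraus operators, and by the standard VBS construction (in which the vector representation sits inside spinor $\otimes$ spinor as the Clebsch--Gordan map) these can be taken to be the $2\ell+1$ Clifford matrices $\Gamma_a$ on the $2^\ell$-dimensional ($\ell$-qubit) virtual space. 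A pleasant simplification relative to the $SU(N)$ case is that no basis transformation is needed: from $\{\Gamma_a,\Gamma_b\}=2\delta_{ab}$ each $\Gamma_a$ is Hermitian and squares to $\mathds{1}$, hence is unitary, and by the recursive construction it is a tensor product of Pauli operators, which settles the byproduct claim. The MPS circuit then follows by directly generalizing the spin-1 construction of Section~\ref{subsec:1aklt}, with $|+_3\rangle$ replaced by the appropriate $(2\ell+1)$-dimensional fiducial state and the uniformly-controlled gate built from the $\Gamma_a$; the equivalence of the real and virtual pictures carries over unchanged.

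Next I would realize the gate set. The crucial input is the covariance of the Clifford matrices under rotations, $V(g)^\dagger\Gamma_i V(g)=\sum_j r_{ij}(g)\Gamma_j$, which is exactly the $SO(2\ell+1)$ instance of the symmetry condition~(\ref{eq:symcond}), with $R(g)=(r_{ij})(g)$ the vector representation and $V(g)$ the spinor representation; the explicit map $r_{ij}(g)=2^{-\ell}\,\mathrm{tr}(\Gamma_i\,V(g)\,\Gamma_j\,V(g)^\dagger)$ is the analog of Eq.~(\ref{eq:adjunitary}). Measuring on a site in the basis rotated by a vector-representation plane rotation therefore induces, on the virtual space, conjugation by the spinor-representation rotation $V_\ell=e^{-i\alpha\Gamma^{ab}}$ with $\Gamma^{ab}$ the generator~(\ref{eq:sogenerator}). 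Since the $\Gamma^{ab}$ span $\mathfrak{so}(2\ell+1)$, products of such plane rotations generate the whole connected group, so the inducible gates exhaust $SO(2\ell+1)$ (acting through its spinor representation), giving the gate-set claim.

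The heart of the proof, and the step I expect to be the main obstacle, is the success-probability count. For an induced plane rotation $V_\ell=e^{-i\alpha\Gamma^{ab}}$ the operation enacted for measurement outcome $\Gamma_c$ is $V_\ell^\dagger\Gamma_c V_\ell$, and I would establish, using $(\Gamma^{ab})^2=\mathds{1}$ together with the Clifford relations, the factorization
\begin{equation}
  V_\ell^\dagger\,\Gamma_c\,V_\ell=\begin{cases}\Gamma_c\,e^{-2i\alpha\Gamma^{ab}}, & c\in\{a,b\},\\[2pt] \Gamma_c, & c\notin\{a,b\}.\end{cases}
\end{equation}
Thus the outcome yields the nontrivial gate $e^{-2i\alpha\Gamma^{ab}}$ (up to the Pauli byproduct $\Gamma_c$) precisely when the byproduct $\Gamma_c$ lies in the rotation plane, i.e.\ for exactly $2$ of the $2\ell+1$ outcomes, whence $p=2/(2\ell+1)$. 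The remaining $2\ell-1$ outcomes commute with $\Gamma^{ab}$ and enact only a byproduct (an identity gate), in which case the measurement is repeated on the next site, lengthening the chain on average by a factor $(2\ell+1)/2$. Specializing to $\ell=1$, where $\Gamma^{23}=X$ and the successful byproducts are $\Gamma_2=Y,\Gamma_3=Z$, recovers the spin-1 value $2/3$ and the output~(\ref{eq:out-aklt}).

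The only remaining bookkeeping is byproduct propagation: conjugating a subsequent plane rotation by a Clifford byproduct returns a plane rotation with the sign of its angle possibly flipped, so all byproducts can be commuted to the end and corrected, exactly as in the spin-1 and $SU(N)$ schemes; initialization and readout on the virtual space follow the same projection construction as in the $SU(N)$ case, by a basis rotation turning some Kraus operators into projectors. Assembling the byproduct, gate-set, and success-probability facts then proves Proposition~\ref{prop:son}. The technical crux is the factorization identity and the accompanying verification that exactly the two in-plane Clifford byproducts produce a nontrivial gate, since this is what pins the probability to $2/(2\ell+1)$ rather than to the majority-of-outcomes value $d/(d+1)$ seen for $SU(N)$.
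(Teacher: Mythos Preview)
Your proposal is correct and follows essentially the same route as the paper: identify the Kraus operators as the Clifford matrices $\Gamma_a$ (hence unitary tensor products of Paulis), use the spinor--vector covariance to induce plane rotations $e^{-i\alpha\Gamma^{ab}}$ whose products generate $SO(2\ell+1)$, and obtain $p=2/(2\ell+1)$ from the anticommutation dichotomy you display, which is exactly the paper's Eq.~(\ref{eq:sun-prog}). The only point the paper adds that you gloss over is the explicit construction of the virtual-space projectors (their Eq.~(\ref{eq:proj-so}), from the pair $\Gamma_1,\Gamma_2$ via a Hadamard), confirming that the projection success probability also equals $2/(2\ell+1)$.
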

\indent We follow a similar routine as that for the $SU(N)$ case in the last section.
For systems with on-site fundamental irrep of $SO(2\ell+1)$,
an SPT state is constructed with the virtual space as the spinor irrep,
which is of dimension $2^\ell$~\cite{TZX08}.
We see that the virtual space dimension is exponentially larger than the on-site physical dimension.
In the VBS picture, this model is constructed such that a parent Hamiltonian
\begin{equation}\label{}
  H=\sum_i P_{i,i+1}^s
\end{equation}
is the sum of projectors onto the symmetric part of
\begin{equation}\label{}
  \bm{n}\otimes \bm{n}=\bm{1}\oplus \bm{n(n-1)/2} \oplus \bm{(n+2)(n-1)/2},
\end{equation}
for $\bm{1}$ the singlet part,
$\bm{n(n-1)/2}$ the antisymmetric part,
and $\bm{(n+2)(n-1)/2}$ the symmetric part. The ground state only contains the singlet part and antisymmetric part.
Similar with the spin-1 AKLT state,
this state has a ``diluted'' anti-ferromagnetic string order, long-range correlation, and a gap.
In the MPS form, the Kraus operators are the Clifford matrices
which will be byproduct in MBQC.

We find that the schemes for the MPS quantum circuit,
symmetry induced gates, projection in virtual space, and byproduct operator propagation
are all analogues of those for the qudit cluster states and $SU(N)$ AKLT states.
A general matrix $R\in SO(2\ell+1)$ in the spinor form will be
\begin{equation}\label{}
  V=\prod_{ab}e^{-i \theta_{ab} \Gamma^{ab}}:=\prod_{ab} V_{ab},
\end{equation}
namely, a product of $\ell(2\ell+1)$ unitary matrices $V_{ab}:=e^{-i \theta_{ab} \Gamma^{ab}}$.
Given $V$ acting on the virtual space,
the corresponding gate $U$ acting on the local on-site system can be found based on Eq.~(\ref{eq:adjunitary}),
yet some care is needed.
Here there are only $(2\ell+1)$ byproduct operators $\Gamma^a$,
and the generators $\Gamma^{ab}$ of $so(2\ell+1)$ do not span the virtual space,
which has dimension $2^\ell$, the dimension of $\ell$ qubits.
The spanning basis of $su(2^\ell)$ are tensor product of Pauli matrices,
which include the set $\{\Gamma^{a}\}$ as a subset.
We use the set $\{\Gamma^{a}\}$ and the rest from the basis of $su(2^\ell)$, denoted by $\{P^{a}\}$,
to define the entries $u_{ab}=\text{tr}(G^{a} V G^{b} V^\dagger)/2^\ell$ in Eq.~(\ref{eq:adjunitary}).
Here the set $\{G^{a}\}=\{\Gamma^{a}\}\cup \{P^{a}\}$.
It is easy to find that
$U$ is a $(2\ell+1)\times (2\ell+1)$ orthogonal matrix,
which specifies the measurement bases for nontrivial gates.
Given the input state $|\text{in}\rangle$, the final state on the virtual space is
\begin{equation}\label{eq:out-aklton}
  |\text{out}\rangle=\prod_{ab} V_{ab}^\dagger \Gamma^{j_{ab}} V_{ab} |\text{in}\rangle.
\end{equation}
For nontrivial gates,
we find that the whole set of the group $SO(2\ell+1)$ can be realized on the virtual system.

For byproduct operator propagation, it holds
\begin{subequations}\label{eq:sun-prog}
\begin{alignat}{2}
  e^{-i \theta \Gamma^{ab}} \Gamma^j e^{i \theta \Gamma^{ab}}&=\Gamma^j e^{i 2\theta \Gamma^{ab}}, \; j=a \; \text{or} \; b,\\
  e^{-i \theta \Gamma^{ab}} \Gamma^j e^{i \theta \Gamma^{ab}}&=\Gamma^j,\; j\neq a,b.
\end{alignat}
\end{subequations}
This means the success probability of a nontrivial gate is
\begin{equation}\label{eq:sucakltso}
  \mathfrak{p}_\text{suc.}=\frac{2}{2\ell+1},
\end{equation}
which also includes the spin-1 AKLT state as a special case, i.e., for $\ell=1$.

For projection, it is clear to see that only linear combinations of two Clifford operators can
lead to projection, and they are
$\sigma_1\otimes\cdots\sigma_1\otimes\sigma_1$ and
$\sigma_1\otimes\cdots\sigma_1\otimes\sigma_2$.
With a Hadamard gate $H$, the resulting two projectors are
\begin{align}\label{eq:proj-so}
  P_0 = \sigma_1\otimes\cdots\sigma_1\otimes\sigma^+,
  P_1 = \sigma_1\otimes\cdots\sigma_1\otimes\sigma^-
\end{align}
for $\sigma^{\pm}=(\sigma_1\pm\sigma_2)/2$.
We can see that the success probability of projection is the same with that of nontrivial gates,
and it is smaller than the success probability of $SU(N)$ case.

\begin{proof}[Proof of Prop.~\ref{prop:son}]
Eq.~(\ref{eq:out-aklton}) shows that the gates that can be performed form $SO(2\ell+1)$
  and the byproduct are Clifford matrices, which are tensor product of Pauli operators.
  The success probability for nontrivial gate or projection is from Eq.~(\ref{eq:sucakltso}).
\end{proof}

\begin{example}\label{ex:so5}
For the $SO(5)$ AKLT state with $\ell=2$, physical dimension $d=5$, and virtual dimension $\chi=4$,
there are five Kraus operators at each site
\begin{subequations}\label{eq:so5K}
\begin{alignat}{2}
A_0&=X\otimes X, A_1=X\otimes Y, A_2=X\otimes Z,\\
 A_3&=Y\otimes \mathds{1}, A_4=Z\otimes \mathds{1}.
\end{alignat}
\end{subequations}
The MPS circuit is a direct generalization of the case of spin-1 AKLT state.
In the virtual space picture,
each spin-2 with initial state $|+_5\rangle$ comes from the projection on two four-level systems,
and the uniformly-controlled Pauli gate~(\ref{eq:Uaklt})
is substituted by uniformly controlled ``spinor'' gates
(the five $A_i$ in Eq.~(\ref{eq:so5K})).

For computation the gates that can be realized on the virtual system
form the whole group $SO(5)$.
The virtual system can be treated as two qubits.
A universal set of qubit gates on the second qubit can be induced as follows.
A rotation $e^{-i\theta Z}$ can be achieved with probability $2/5$ by a linear combination of $A_0$ and $A_1$.
Similar results hold for other rotations that together can represent an arbitrary qubit rotation.
On the contrary, a universal set of qubit gates on the first qubit can not be induced.
Only rotations of the form $e^{-i\theta X}$ can be induced with probability $2/5$ by
linear combinations of $A_3$ and $A_4$.
Coupling between the two qubits can be induced by other types of linear combinations of these $A_i$ operators.

There is another way to encode two qubits in the virtual space.
As $SO(4)\cong SU(2)\times SU(2)\subset SO(5)$,
we can use the $SO(4)$ subgroup to induce two set of qubit gates in the virtual space.
This means one level of the on-site physical system is left untouched.
The gates on the virtual space will take the form $V\otimes W\in SU(2)\times SU(2)$,
and the corresponding orthogonal matrix $R\in SO(4)$ can be found by a basis transformation~\cite{FOS07}
\begin{equation}\label{}
R=B(V\otimes W)B^\dagger,
\end{equation}
with
\begin{equation}\label{}
B=\frac{1}{\sqrt{2}}\begin{pmatrix}
1 & 0 & 0 &-i \\ 0 & -i &-1 &0 \\ 0 &-i &1 &0 \\ 1 &0 &0 &i
\end{pmatrix}.
\end{equation}
It turns out the two qubits cannot be coupled together, otherwise it will be able to generate the
whole group $SU(4)$.\hfill $\blacksquare$
\end{example}

\subsubsection{On-site adjoint representation}
\label{subsec:so_adj}

With similar method,
we obtain the following proposition for the case of on-site adjoint irrep.
\begin{proposition}
  For the $SO(2\ell+1)$ AKLT state with on-site irrep $\ell(2\ell+1)$ and virtual irrep $2^\ell$,
  the gates on the virtual space induced by on-site projective measurements form $SO(2\ell+1)$
  with success probability $p=2(2\ell-1)/\ell(2\ell+1)$ and byproducts as (tensor products of) Pauli operators.
\end{proposition}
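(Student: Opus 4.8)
The plan is to reuse the four-part template established for the on-site fundamental case (Prop.~\ref{prop:son}), modifying each step to account for the on-site system now carrying the adjoint irrep of dimension $\ell(2\ell+1)$ in place of the fundamental $2\ell+1$. First I would identify the Kraus operators. Since the on-site irrep is the adjoint $\ell(2\ell+1)$, there are $\ell(2\ell+1)$ site-independent Kraus operators, and by the symmetry condition~(\ref{eq:symcond}) with $u(g)$ the adjoint representation and $V(g)$ the spinor representation they transform in the adjoint; hence they are proportional to the $so(2\ell+1)$ generators $\Gamma^{ab}=-i\Gamma^a\Gamma^b$~(\ref{eq:sogenerator}) acting on the $2^\ell$-dimensional virtual space. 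A one-line check shows each $\Gamma^{ab}$ is both hermitian and unitary with $(\Gamma^{ab})^2=\mathds{1}$, so the normalization $A_{ab}=\Gamma^{ab}/\sqrt{\ell(2\ell+1)}$ satisfies $\sum_{ab}A_{ab}^\dagger A_{ab}=\mathds{1}$; moreover each $\Gamma^{ab}$, being a product of two Clifford matrices, is a tensor product of Pauli operators up to phase. This settles the byproduct claim.

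Next, the MPS-circuit construction and the gate-induction argument are direct analogues of the fundamental case: an arbitrary target is written as $V=\prod_{ab}V_{ab}$ with $V_{ab}=e^{-i\theta_{ab}\Gamma^{ab}}$, and since the $\Gamma^{ab}$ generate $so(2\ell+1)$ the reachable gates fill $SO(2\ell+1)$. The rotated on-site measurement basis for each $V_{ab}$ is fixed by the adjoint-action matrix obtained from Eq.~(\ref{eq:adjunitary}), now using the basis $\{G^a\}=\{\Gamma^{ab}\}\cup\{P^a\}$ of $su(2^\ell)$ (with the complementary set $\{P^a\}$ nonempty for $\ell\geq2$); this reproduces an output of the form~(\ref{eq:out-aklton}).

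The crux, which I expect to be the main obstacle, is the byproduct propagation and the resulting success probability. I would first establish the commutation structure of the generators: $\Gamma^{ab}$ and $\Gamma^{cd}$ commute when the index sets $\{a,b\}$ and $\{c,d\}$ share zero or two indices, and anticommute when they share exactly one. Feeding this into the conjugation $e^{-i\theta_{ab}\Gamma^{ab}}\,\Gamma^{cd}\,e^{i\theta_{ab}\Gamma^{ab}}$ gives $\Gamma^{cd}$ in the commuting case (identity gate, failure) and $\Gamma^{cd}e^{i2\theta_{ab}\Gamma^{ab}}$ in the anticommuting case (a nontrivial rotation up to byproduct, success), exactly mirroring Eq.~(\ref{eq:sun-prog}). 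Because the $\Gamma^{ab}$ are unitary, every Kraus outcome is equiprobable, so the success probability equals the fraction of byproduct labels $\{c,d\}$ sharing exactly one index with the fixed gate label $\{a,b\}$; this count is $2(2\ell-1)$ out of the $\ell(2\ell+1)$ total, giving $p=2(2\ell-1)/\ell(2\ell+1)$. I expect the index bookkeeping to be the delicate step: unlike the vector byproducts $\Gamma^j$ of the fundamental case, both the gate generators and the byproducts are now the two-index objects $\Gamma^{ab}$, so the shared-index counting must be carried out carefully. Initialization and readout then follow as before, by exhibiting a rotated basis in which suitable linear combinations of the $\Gamma^{ab}$ become projectors, in analogy with~(\ref{eq:proj-so}).
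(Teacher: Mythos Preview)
Your proposal is correct and follows essentially the same route as the paper: the paper likewise takes the Kraus operators to be the generators $\Gamma^{ab}$, determines the on-site measurement bases via Eq.~(\ref{eq:adjunitary}) with basis $\{\Gamma^{ab}\}\cup\{P^a\}$, and obtains the success probability from the anticommutation $\{\Gamma^a\Gamma^b,\Gamma^m\Gamma^n\}=0$ precisely when the index pairs overlap in exactly one element, yielding the same count $2(2\ell-1)$ out of $\ell(2\ell+1)$. Your exposition is in fact more explicit than the paper's, which largely defers to Prop.~\ref{prop:son} and omits the details.
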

\noindent The proof is similar with that for Prop.~\ref{prop:son}, hence omitted,
while here the success probability is from Eq.~(\ref{eq:sucaklton2}) below.

With on-site adjoint representation, the set of Kraus operators at each site
can be chosen as the generators $\{\Gamma^{ab}\}$ of the $so(2\ell+1)$ algebra, see Eq.~(\ref{eq:sogenerator}).
Following a similar procedure,
we find the success probability for both nontrivial gates and projections is
\begin{equation}\label{eq:sucaklton2}
  \mathfrak{p}_\text{suc.}=\frac{2(2\ell-1)}{\ell(2\ell+1)}.
\end{equation}
For a nontrivial gate, this is based on the following fact that,
given $\Gamma^a\Gamma^b$, the anti-commutation relation
\begin{equation}\label{eq:proj-so2}
  \{\Gamma^a\Gamma^b , \Gamma^m\Gamma^n\}=0
\end{equation}
holds for the cases $m=a,n\neq a,b$ and $m=b,n\neq a,b$,
which occurs for $2(2\ell-1)$ times out of $\ell(2\ell+1)$ events.
A similar argument can also be made for projection.
This also includes the spin-1 AKLT state as a special case.
Compared with the states with on-site fundamental representation in section~\ref{subsec:so_fund},
the success probability becomes bigger with a factor of $(2\ell-1)/\ell$.

To determine the measurement bases,
we define the matrix entries in  Eq.~(\ref{eq:adjunitary})
with respect to the basis formed by $\{\Gamma^{ab}\}$ and the rest from other operators
in the basis of $su(2^\ell)$.
Now the operator $U$ in the adjoint representation is an $\ell(2\ell+1)\times \ell(2\ell+1)$ orthogonal matrix,
and similar with the case in section~\ref{subsec:so_fund},
this can realize the whole set of $SO(2\ell+1)$ group on the virtual system.

\begin{example}\label{ex:so52}
For the $SO(5)$ state with $\ell=2$, virtual dimension $\chi=4$, and on-site physical dimension $d=10$,
there are ten Kraus operators at each site
\begin{equation}\label{eq:so5k-10}
  \{\mathds{1}X,\mathds{1}Y,\mathds{1}Z,ZX,ZY,ZZ,YX,YY,YZ,X\mathds{1}\}
\end{equation}
defined as $A_iA_j$ with $A_i$ from Eq.~(\ref{eq:so5K}).
Compared with the state in Example~\ref{ex:so5}, the computational properties are similar,
e.g., there are also different ways to use the virtual system.
The advantage of the larger on-site physical dimension is to increase the success probability.
Both rotations $e^{-i\theta X}$ and $e^{-i\theta Z}$ can be induced on the second qubit with probability $3/5$,
while only $e^{-i\theta X}$ can be induced on the first qubit with probability $3/5$.
Projections on both qubits have probability $3/5$. \hfill $\blacksquare$
\end{example}

\subsection{MBQC on \texorpdfstring{$Sp(N)$}{Lg} AKLT states}
\label{sec:symp}

In this last section we study the symplectic symmetry generalizations of spin-1 AKLT state.
In many-body physics symplectic symmetry generalization of $SU(2)$ has been studied, e.g., Refs.~\cite{RS91,WZ05,FC09},
which is a natural way to account for time-reversal symmetry,
and does not require a bipartite lattice, hence it is distinct from the $SU(N)$ models which
rely on a bipartite lattice structure (i.e. fundamental and its adjoint irreps) of the underlying spins.
The symplectic group $Sp(2n)\cong U(2n) \bigcap Sp(2n, \mathbb{C})$ is a compact group
of rank $n$ and dimension $n(2n+1)$, and contains unitary matrices
that preserve the symplectic form
\begin{equation}\label{eq:symform}
\Delta=\begin{pmatrix} 0 & -\mathds{1}\\ \mathds{1} & 0 \end{pmatrix}
\end{equation}
such that
\begin{equation}\label{}
  U^t \Delta U = \Delta, \; \forall U\in Sp(2n).
\end{equation}
Note the group $Sp(2n)$ is also referred as the quaternionic unitary group,
and it is not the same as the non-compact group $Sp(2n, \mathbb{R})$.
Also note the actual symmetry we consider is projective $Sp(2n)$.

The SPT phase classification~\cite{DQ13a} shows that there are two phases
\begin{equation}\label{}
  H^2(Sp(2n)/\mathbb{Z}_2,U(1))=\mathbb{Z}_2
\end{equation}
since the irreps of $Sp(2n)$ only belong to two different congruence classes.
Some examples of VBS with $Sp(2n)$ symmetry have been constructed before~\cite{SR08},
which have on-site adjoint irrep of dimension $n(2n+1)$ and virtual space as the fundamental irrep with dimension $2n$.
Here we study the computational properties of these states for MBQC.
We obtain the following proposition.
\begin{proposition}\label{prop:spn}
  For $Sp(2n)$ AKLT state with on-site irrep $n(2n+1)$ and virtual irrep $2n$ for $n=2^m$,
  the gates on the virtual space induced by on-site projective measurements form $Sp(2n)$
  with success probability $p=(n+1)/(2n+1)$ and byproducts as (tensor products of) Pauli operators.
\end{proposition}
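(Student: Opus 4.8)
The plan is to reuse the architecture of the proofs of Prop.~\ref{prop:sun} and Prop.~\ref{prop:son}, adding the three symplectic-specific ingredients: the Pauli-string form of the byproducts, the generation of $Sp(2n)$, and the count behind the success probability. First I would pin down the byproduct operators. Because $n=2^{m}$ the virtual space has dimension $2n=2^{m+1}$ and may be viewed as $m+1$ qubits, so I look for the $sp(2n)$ generators in the fundamental irrep among tensor products of Pauli operators. In this qubit splitting the symplectic form of Eq.~(\ref{eq:symform}) reads $\Delta=-i\,Y\otimes\mathds{1}_{n}$, and a Hermitian $G$ generates $Sp(2n)$ iff $\Delta G\Delta^{-1}=-G^{*}$, which for a Pauli string $P=P_{1}\otimes\cdots\otimes P_{m+1}$ becomes $(Y\otimes\mathds{1}_{n})P(Y\otimes\mathds{1}_{n})=-P^{*}$. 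The left side multiplies $P$ by $+1$ for $P_{1}\in\{\mathds{1},Y\}$ and $-1$ for $P_{1}\in\{X,Z\}$, while the right side multiplies by $(-1)^{y+1}$ with $y$ the number of $Y$ factors; thus the condition is a single parity relation between the type of $P_{1}$ and $y$. Counting the Pauli strings obeying it yields exactly $n(2n+1)=\dim sp(2n)$ of them, so these are the Kraus operators, manifestly tensor products of Pauli operators.

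Next I would transcribe the computation from Sec.~\ref{sec:so-n}. The MPS circuit is the direct $Sp(2n)$ analogue of the spin-1 construction: on-site dimension $n(2n+1)$, every site initialized in $|+\rangle$, and a uniformly-controlled gate whose target operators are the Pauli-string generators just isolated. Any virtual-space gate is written $V=\prod_{ab}V_{ab}$ with $V_{ab}=e^{-i\theta_{ab}G_{ab}}$ running over all generators $G_{ab}$, so products of the $V_{ab}$ exhaust the connected group $Sp(2n)$; the measurement basis implementing a given $V_{ab}$ follows from the on-site adjoint representation through the analogue of Eq.~(\ref{eq:adjunitary}), producing an $n(2n+1)\times n(2n+1)$ orthogonal matrix $U$. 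The induced evolution is then $|\text{out}\rangle=\prod_{ab}V_{ab}^{\dagger}P_{ab}V_{ab}|\text{in}\rangle$, the symplectic counterpart of Eq.~(\ref{eq:out-aklton}), from which the gate-set claim follows.

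The genuinely new work is the success probability. Two Pauli strings either commute or anticommute, so byproduct propagation through $V_{ab}$ splits cleanly, $V_{ab}^{\dagger}PV_{ab}=P$ when $[P,G_{ab}]=0$ and $V_{ab}^{\dagger}PV_{ab}=Pe^{2i\theta_{ab}G_{ab}}$ when $\{P,G_{ab}\}=0$, in parallel with Eq.~(\ref{eq:sun-prog}); a nontrivial gate is induced precisely when the random byproduct anticommutes with $G_{ab}$. As the byproducts are equidistributed over the $n(2n+1)$ generators, $p$ equals the number of generators anticommuting with $G_{ab}$ divided by $n(2n+1)$. The hard part will be to show this anticommuting count is uniformly $n(n+1)$, giving $p=n(n+1)/[n(2n+1)]=(n+1)/(2n+1)$. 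I would recast it over $\mathbb{F}_{2}$: encoding a Pauli string as $(a,b)\in\mathbb{F}_{2}^{m+1}\times\mathbb{F}_{2}^{m+1}$, the parity condition of the first paragraph is the affine quadric $q(a,b)=a_{1}+b_{1}+\sum_{i}a_{i}b_{i}=1$, while anticommutation with a fixed generator $(a_{0},b_{0})$ is the hyperplane $\sum_{i}(a_{0,i}b_{i}+b_{0,i}a_{i})=1$; the count is then the cardinality of a quadric-hyperplane intersection, which I expect to evaluate to the constant $n(n+1)$ by the standard Witt-index/character-sum count of quadratic forms over $\mathbb{F}_{2}$, with the constraint $q(a_{0},b_{0})=1$ forcing uniformity across generators. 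Finally, mirroring Eq.~(\ref{eq:proj-so}), a projector for initialization and readout is assembled from a linear combination of two anticommuting byproducts, and the same count gives its success probability as $(n+1)/(2n+1)$, finishing the proof.
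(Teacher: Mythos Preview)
Your proposal is correct and follows a route genuinely different from the paper's. The paper identifies the $sp(2n)$ generators concretely as the set $\{X_{ij}\otimes\sigma_k,\,Y_{ij}\otimes\mathds{1}_2,\,Z_j\otimes\sigma_k,\,\mathds{1}_n\otimes\sigma_k\}$ built from the generalized Gell-Mann matrices of Eq.~(\ref{eq:Gell}), argues that for $n=2^m$ these can be linearly recombined into Pauli strings using the symmetry/antisymmetry of $X_{ij}$ and $Y_{ij}$, and then establishes the success probability by \emph{induction on $m$}: the recursive structure of Eq.~(\ref{eq:spge1}) is used to track, case by case, how many operators in each of the subsets $\{X_{ij}\},\{Y_{ij}\},\{Z_j\}$ commute with a fixed generator. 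Your approach bypasses both the Gell-Mann presentation and the induction: you characterise the symplectic Pauli strings intrinsically via $\Delta P\Delta^{-1}=-P^{*}$, and you recast the anticommuting count over $\mathbb{F}_{2}$. One observation worth making explicit, which renders your count immediate without any Witt-index or character-sum machinery, is that the anticommutation pairing $\sum_i(a_{0,i}b_i+b_{0,i}a_i)$ is exactly the \emph{polar form} $B(x,x_0)=q(x+x_0)+q(x)+q(x_0)$ of your quadric $q$; then for any $x_0$ with $q(x_0)=1$ the translation $x\mapsto x+x_0$ gives $\#\{q{=}0,B{=}0\}=\#\{q{=}1,B{=}0\}$, and combining this with $\#\{q{=}1\}=n(2n+1)$ and $\#\{B{=}0\}=2n^{2}$ yields $\#\{q{=}1,B{=}1\}=n(n+1)$ uniformly in $x_0$. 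Your argument is shorter and more conceptual; the paper's induction is more hands-on but produces finer information (the per-subset commutation counts), which it then reuses for the qubit-by-qubit projection probabilities that you handle only in outline.
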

\noindent Again, this claim is proved around the end of this section.

The byproduct operators in MBQC, which will be the generators of $Sp(2n)$~\cite{WZ05}, are determined as follows.
There exists a close relation between the generators of $Sp(2n)$ and $SU(n)$.
The $n(2n+1)$ generators of $Sp(2n)$
can be expressed by the generators of $SU(n)$ and $SU(2)$.
First, let $E_{jk}$ denote the square matrix with a 1 in the $(j,k)$-th entry and 0 elsewhere.
The generalized Gell-Mann matrices for $SU(n)$ are defined as
\begin{subequations}\label{eq:Gell}
\begin{alignat}{3}\label{}
  X_{ij} & =\frac{1}{2}(E_{ij}+E_{ji}),\; 1\leq i<j\leq n, \\
  Y_{ij} & =\frac{-i}{2}(E_{ij}-E_{ji}),\; 1\leq i<j\leq n,  \\
  Z_{j}  & = \frac{1}{\sqrt{2j(j-1)}} \left(\sum_{\ell=1}^{j-1} E_{\ell\ell}-(j-1)E_{jj}\right) ,\; 2\leq j\leq n.
\end{alignat}
\end{subequations}
The generators of $Sp(2n)$ are as follows
\begin{equation}\label{eq:spge}
  \{ X_{ij} \otimes \sigma_k,\; Y_{ij} \otimes \mathds{1}_2, \;
     Z_{j} \otimes \sigma_k, \; \mathds{1}_n \otimes \sigma_k\}:=\{N_\ell\}
\end{equation}
for Pauli matrices $\sigma_k=X,Y,Z$.
The group $Sp(2n)$ has the same dimension with $SO(2n+1)$, and $Sp(4)\cong SO(5)$,
while for $n\geq 3$ they are not locally isomorphic.

A resource wire requires that the byproduct operators are unitary.
We find there exists wire if $n$ is a power of two, i.e., $n=2^m$, $m\in \mathbb{Z}^+$.
In this case, linear combination of operators $X_{ij}$ (also $Y_{ij}$)
can transform them into tensor products of Pauli matrices.
As each $X_{ij}$ ($Y_{ij}$) is symmetric (antisymmetric),
a tensor product of Pauli matrices as well as linear combination of them will also be symmetric (antisymmetric).
Operators $Z_{j}$ are diagonal but not unitary yet can be transformed to tensor products of
$\mathds{1}$ and Pauli $Z$ since $Z_{j}$ are diagonal.
So, to simplify notations, we will just assume the operators $N_\ell$ in Eq.~(\ref{eq:spge})
are already in the form of tensor products of Pauli matrices.

Once we identify the wire, the next problem is to determine
how to encode information in the virtual system and execute gates.
When $n=2^m$, the virtual space is for $m+1$ qubits,
yet not all gates in $SU(2^{m+1})$ can be realized,
and the realizable gates form $Sp(2^{m+1})$,
which contains unitary operators that preserve the symplectic form.
An arbitrary gate on the virtual space takes the form
\begin{equation}\label{eq:out-akltpn}
  V=\prod_\ell e^{-i\theta_\ell N_\ell}
\end{equation}
with $N_\ell$ from Eq.~(\ref{eq:spge}).
Given a gate $V_\ell:=e^{-i\theta_\ell N_\ell}$ on the virtual space,
the corresponding operator $U_\ell$ that determines the on-site measurement basis can be easily found
by noting that,
the basis $\{N_\ell\}$ forms a part of the basis of $su(2^{m+1})$,
and the operator $U_\ell$ can be found to be a $n(2n+1)\times n(2n+1)$ matrix according to Eq.~(\ref{eq:adjunitary}).

To find the success probability of gates, we employ the proof method of induction.
As the operators $X_{ij}$ are symmetric, they only contain an even number of Pauli $Y$ operators.
Also each operator $Y_{ij}$ contains an odd number of Pauli $Y$ operators.
Given the sets $\{X_{ij}\}_m$, $\{Y_{ij}\}_m$, and $\{Z_{j}\}_m$ for $n=2^m$,
the new sets for $n=2^{m+1}$ are
\begin{subequations}\label{eq:spge1}
\begin{alignat}{5}\label{}
  &\{X_{ij}\}_{m+1}\\ \nonumber
  &=\{X_{ij}\otimes \mathds{1}, X_{ij}\otimes X, X_{ij}\otimes Z, Y_{ij}\otimes Y, Z_{j}\otimes X, \mathds{1}_n\otimes X\},\\
  &\{Y_{ij}\}_{m+1}\\ \nonumber
  &=\{Y_{ij}\otimes \mathds{1}, Y_{ij}\otimes X, Y_{ij}\otimes Z, X_{ij}\otimes Y, Z_{j}\otimes Y, \mathds{1}_n\otimes Y\},\\
  &\{Z_{j}\}_{m+1}=\{Z_{j}\otimes \mathds{1}, Z_{j}\otimes Z, \mathds{1}_n\otimes Z\}.
\end{alignat}
\end{subequations}
Given this structure, it is straightforward to prove the following facts by induction:
i) given an arbitrary operator in $\{X_{ij}\}$,
the number of operators in this set that commutes with it is $n^2/4$,
the number of operators in $\{Y_{ij}\}$ that commute with it is $n^2/4-n/2$,
and the number of operators in $\{Z_{j}\}$ that commute with it is $n/2-1$;
ii) given an arbitrary operator in $\{Y_{ij}\}$,
the number of operators in this set that commute with it is $n^2/4$,
the number of operators in $\{X_{ij}\}$ that commute with it is $n^2/4-n/2$,
and the number of operators in $\{Z_{j}\}$ that commute with it is $n/2-1$;
iii) given an arbitrary operator in $\{Z_{j}\}$,
the number of operators in both $\{X_{ij}\}$ and $\{Y_{ij}\}$
that commute with it is $n^2/4-n/2$.
The proof is only based on the commutation relations among these operators.
As it is quite obvious, there is no need to present the details.

Based on the above results,
we find the success probability is the same for each gate of the form $V_\ell$ which is
\begin{equation}\label{eq:susym}
  \mathfrak{p}_\text{suc.}=\frac{n+1}{2n+1} \rightarrow \frac{1}{2}.
\end{equation}
Basically, a gate will succeed if the generator involved anticommutes with a byproduct,
and we obtain identity otherwise.

In addition, as the virtual space is $m+1$ qubits,
one may also consider single qubit gate separately and the coupling between them,
the gate forms and success probability can also be obtained.
Consider the projection on each virtual qubit.
We find that the success probability on each of them is also $\mathfrak{p}_\text{suc.}$.
The proof is also by induction.
Suppose that for $n=2^m$,
the number of projectors on any qubit
from the set $\{X_{ij}\}$ (also $\{Y_{ij}\}$) is $n^2/4-n/2$,
and from the set $\{Z_{j},\mathds{1}_n\}$ is $n$,
which of course holds for $m=1$,
and then based on the set structure~(\ref{eq:spge1}) it is not difficult to find
that for $n=2^{m+1}$,
the number of projectors on any qubit
from the set $\{X_{ij}\}$ (also $\{Y_{ij}\}$) is $n^2-n$,
and from the set $\{Z_{j},\mathds{1}_n\}$ is $2n$,
which leads to success probability $\frac{2n+1}{4n+1}$.
This proves our claim.
Furthermore, with the same method we also find that
qubit gates of the form $e^{i\theta X}$ or $e^{i\theta Z}$ on any qubit
have the same success probability $\mathfrak{p}_\text{suc.}$.
The on-site bases to induce qubit gates
will be simply defined by operators that are a tensor product of $SO(3)$ rotations with identity,
and a tensor product of Hadamard $H$ gates with identity for projections.

\begin{proof}[Proof of Prop.~\ref{prop:spn}]
  From the above analysis,
  the gates that can be performed form $Sp(2^{m+1})$ with gate form
  from Eq.~(\ref{eq:out-akltpn}),
  and the byproducts are tensor products of Pauli operators.
  The success probability for nontrivial gate or projection is from Eq.~(\ref{eq:susym}).
\end{proof}

Furthermore, for the cases when $n$ is not a power of two,
we cannot show that the byproduct can be unitary so far.
For instance, for $n=3$ the $Sp(6)$ AKLT state can be constructed with on-site irrep
$\bm{21}$ and virtual irrep $\bm{6}$.
The virtual space is six-dimensional, while the gates that can be induced by symmetry
form the group $Sp(6)$, which is smaller than $SU(3)$.
This means the virtual space cannot be encoded as a qutrit
for universal qutrit gate computation.
We note, however, more sophisticated study is required for the cases $n\neq 2^m$ in general.
Below are some examples to demonstrate our general results,
and we find there could be alternative encodings for a certain specific case.

\begin{example}
For $n=2$, the state is the same as the $SO(5)$ AKLT state
for on-site $\bm{10}$ irrep since $Sp(4)\cong SO(5)$.
There are ten Kraus operators at each site
\begin{equation}\label{eq:sp4k}
  \{\mathds{1}X,\mathds{1}Y,\mathds{1}Z,ZX,ZY,ZZ,XX,XY,XZ,Y\mathds{1}\},
\end{equation}
which is equivalent to the set~(\ref{eq:so5k-10})
by the unitary transform $S\otimes \mathds{1}$ with phase gate $S$.

Furthermore, we show that there is a rebit encoding of information in the virtual space.
A unitary matrix $U=V+iW\in SU(N)$  with real part $V$ and imaginary part $W$ can be mapped to a matrix
\begin{equation}\label{eq:symplecticS}
S_U=\begin{pmatrix}
V & -W \\ W & V
\end{pmatrix}\in Sp(2N,\mathbb{C}) \bigcap SO(2N),
\end{equation}
which is both symplectic and orthogonal, i.e., $S^t_U \Delta S_U=\Delta$ and $S^t_US_U=\mathds{1}$,
for the symplectic form $\Delta$ defined in Eq.~(\ref{eq:symform}).
The action of $U$ on a state $|\psi\rangle=|\psi_R\rangle+i|\psi_I\rangle$ can also
be expressed by the actions on the real and imaginary parts separately
\begin{equation}\label{}
  |\psi_R\rangle \mapsto V|\psi_R\rangle-W|\psi_I\rangle , \;
  |\psi_I\rangle \mapsto V|\psi_I\rangle+W|\psi_R\rangle.
\end{equation}
This has actually been proposed to be a universal model of quantum computation based on rebits~\cite{RG02}.
Here for our case the virtual system can be treated as two rebits with state of the form
$\begin{psmallmatrix} |\psi_R\rangle \\ |\psi_I\rangle \end{psmallmatrix}$,
which is a two-rebit encoding of a qubit,
and then any qubit gate $U\in SU(2)$ can be performed in terms of its symplectic form $S_U$~(\ref{eq:symplecticS}).
Given $U\in SU(2)$ and its $S_U$,
the corresponding matrix for on-site measurement basis can be simply determined.
Note the set of $S_U\in Sp(4,\mathbb{C}) \bigcap SO(4) \subset Sp(4)$ for qubit gate computation
only forms a subset of the symmetry group $Sp(4)\supset SU(2)$.
Computational basis projection on the virtual space is performed by the same projection
on both the real and imaginary parts,
since a projection $P$ is a real matrix and its rebit version is $\mathds{1}_2 \otimes P$.
\hfill $\blacksquare$
\end{example}

\begin{example}
For $n=4$, the set of generators of $Sp(8)$ are
\begin{align}\label{eq:sp8}\nonumber
\{   \mathds{1} X \sigma_k,  X \mathds{1} \sigma_k, X X \sigma_k, Z X \sigma_k, X Z \sigma_k, Y Y \sigma_k, & \\ \nonumber
    \mathds{1} Y \mathds{1}, Y \mathds{1} \mathds{1}, YX \mathds{1}, ZY \mathds{1}, YZ \mathds{1}, XY \mathds{1}, &  \\
     Z \mathds{1} \sigma_k, \mathds{1} Z \sigma_k, Z Z \sigma_k, \mathds{1} \mathds{1}  \sigma_k\}.
\end{align}
The virtual space is of dimension eight,
hence can be viewed as three qubits.
The unitary gates that can be performed form the group $Sp(8)$,
and the success probability for gate or projection is $5/9$.

For alternative encoding,
we find that the rebit encoding and computation that applies for $n=2$ does not generalize,
namely in this case, $Sp(8)$ does not contain $SU(4)$,
hence the virtual space cannot be used as a rebit encoding of two qubits.
In addition, we also find that the virtual space can be used as three uncoupled qubits
based on the chain relation
\begin{equation}\label{}
Sp(8)\supset SU(2)\times SU(2)\times SU(2),
\end{equation}
which means arbitrary qubit gate can be performed on each of them.
The success probability for operations on each qubit can be obtained
following our general method. \hfill $\blacksquare$
\end{example}

\section{Conclusion}
\label{sec:conc}

In this work, we explored the connection
between measurement-based quantum computation and symmetry
by studying two classes of resource states:
qudit cluster states and AKLT-type states with unitary, orthogonal, or symplectic symmetry,
for which their SPT orders follow from their symmetries.
We find that for both qudit cluster states and AKLT-type states,
the gates that can be performed in the correlation space
follow from the corresponding symmetry group and the SPT index.
Parent Hamiltonians and quantum circuit representations
of these resource states are also analyzed,
while detailed schemes for preparing resource states are a separate topic
and left for other investigations.

We have identified several cases that may not be suitable for MBQC,
which, however, may turn out to be possible if additional manipulations are allowed.
These are AKLT-type states with $SU(N)$ symmetry but the virtual space is not the fundamental irrep,
states with $SO(2\ell)$ symmetry, and states with $Sp(2n)$ symmetry but for $n$ not a power of two.
It is apparent that the reasons we have pointed out for each of them are different,
but whether there is a unified viewpoint, like a group-theoretic proof in Ref.~\cite{RWP+16,SWP+16},
is an interesting open problem.

Our study of cluster states and AKLT-type states can be extended to higher-dimensional cases.
The on-site physical irrep in AKLT-type states will be different from the one-dimensional case,
while the on-site physical dimension for cluster states can stay the same.
Along this direction, examples of 2D $SU(2)$ AKLT states~\cite{WAR11,Miy11}
have been shown to be universal,
while it is left for further investigations
for AKLT states with higher symmetries.

\section{Acknowledgement}

D.-S. Wang would like to thank I. Affleck for helpful discussions.
This work is supported by NSERC and Cifar.
R. R. is scholar of the Cifar Program in Quantum Information Science.

\bibliography{ext}{}

\begin{thebibliography}{10}

\bibitem{RB01}
Robert Raussendorf and Hans~J. Briegel.
\newblock A one-way quantum computer.
\newblock {\em Phys. Rev. Lett.}, 86:5188--5191, May 2001.

\bibitem{NC00}
Michael~A. Nielsen and Isaac~L. Chuang.
\newblock {\em Quantum Computation and Quantum Information}.
\newblock Cambridge University Press, Cambridge U.K., 2000.

\bibitem{GE07}
D.~Gross and J.~Eisert.
\newblock Novel schemes for measurement-based quantum computation.
\newblock {\em Phys. Rev. Lett.}, 98:220503, May 2007.

\bibitem{CZG+09}
Xie Chen, Bei Zeng, Zheng-Cheng Gu, Beni Yoshida, and Isaac~L. Chuang.
\newblock Gapped two-body hamiltonian whose unique ground state is universal
  for one-way quantum computation.
\newblock {\em Phys. Rev. Lett.}, 102:220501, Jun 2009.

\bibitem{CDJ+10}
Xie Chen, Runyao Duan, Zhengfeng Ji, and Bei Zeng.
\newblock Quantum state reduction for universal measurement based computation.
\newblock {\em Phys. Rev. Lett.}, 105:020502, Jul 2010.

\bibitem{WAR11}
Tzu-Chieh Wei, Ian Affleck, and Robert Raussendorf.
\newblock Affleck-kennedy-lieb-tasaki state on a honeycomb lattice is a
  universal quantum computational resource.
\newblock {\em Phys. Rev. Lett.}, 106:070501, Feb 2011.

\bibitem{WAR12}
Tzu-Chieh Wei, Ian Affleck, and Robert Raussendorf.
\newblock Two-dimensional affleck-kennedy-lieb-tasaki state on the honeycomb
  lattice is a universal resource for quantum computation.
\newblock {\em Phys. Rev. A}, 86:032328, Sep 2012.

\bibitem{Miy11}
Akimasa Miyake.
\newblock Quantum computational capability of a 2d valence bond solid phase.
\newblock {\em Ann. Phys.}, 326(7):1656 -- 1671, 2011.
\newblock July 2011 Special Issue.

\bibitem{DBB12}
Andrew~S Darmawan, Gavin~K Brennen, and Stephen~D Bartlett.
\newblock Measurement-based quantum computation in a two-dimensional phase of
  matter.
\newblock {\em New J. Phys.}, 14(1):013023, 2012.

\bibitem{WHR14}
Tzu-Chieh Wei, Poya Haghnegahdar, and Robert Raussendorf.
\newblock Hybrid valence-bond states for universal quantum computation.
\newblock {\em Phys. Rev. A}, 90:042333, Oct 2014.

\bibitem{NW15}
Hendrik~Poulsen Nautrup and Tzu-Chieh Wei.
\newblock Symmetry-protected topologically ordered states for universal quantum
  computation.
\newblock {\em Phys. Rev. A}, 92:052309, Nov 2015.

\bibitem{MM15b}
Jacob Miller and Akimasa Miyake.
\newblock Hierarchy of universal entanglement in 2d measurement-based quantum
  computation.
\newblock {\em npj Quantum Information}, 2:16036, 2016.

\bibitem{HDE+06}
Marc Hein, Wolfgang D{\"u}r, Jens Eisert, Robert Raussendorf, M.~Van~den Nest,
  and H.~J. Briegel.
\newblock Entanglement in graph states and its applications.
\newblock In {\em Quantum Computers, Algorithms and Chaos}, pages 115--218. IOS
  Press, 2006.

\bibitem{CGW11}
Xie Chen, Zheng-Cheng Gu, and Xiao-Gang Wen.
\newblock Classification of gapped symmetric phases in one-dimensional spin
  systems.
\newblock {\em Phys. Rev. B}, 83:035107, Jan 2011.

\bibitem{SPC11}
Norbert Schuch, David P\'erez-Garc\'{i}a, and Ignacio Cirac.
\newblock Classifying quantum phases using matrix product states and projected
  entangled pair states.
\newblock {\em Phys. Rev. B}, 84:165139, Oct 2011.

\bibitem{CGL+13}
Xie Chen, Zheng-Cheng Gu, Zheng-Xin Liu, and Xiao-Gang Wen.
\newblock Symmetry protected topological orders and the group cohomology of
  their symmetry group.
\newblock {\em Phys. Rev. B}, 87:155114, Apr 2013.

\bibitem{DQ13a}
Kasper Duivenvoorden and Thomas Quella.
\newblock Topological phases of spin chains.
\newblock {\em Phys. Rev. B}, 87:125145, Mar 2013.

\bibitem{DQ13b}
Kasper Duivenvoorden and Thomas Quella.
\newblock From symmetry-protected topological order to landau order.
\newblock {\em Phys. Rev. B}, 88:125115, Sep 2013.

\bibitem{NMC+13}
H~Nonne, M~Moliner, Sylvain Capponi, P~Lecheminant, and K~Totsuka.
\newblock Symmetry-protected topological phases of alkaline-earth cold
  fermionic atoms in one dimension.
\newblock {\em EPL (Europhysics Letters)}, 102(3):37008, 2013.

\bibitem{GHG+10}
A.~V. Gorshkov, M.~Hermele, V.~Gurarie, C.~Xu, P.~S. Julienne, J.~Ye,
  P.~Zoller, E.~Demler, M.~D. Lukin, and A.~M. Rey.
\newblock Two-orbital su(n) magnetism with ultracold alkaline-earth atoms.
\newblock {\em Nat. Phys.}, 6(4):289--295, 2010.

\bibitem{NLC+16}
Pierre Nataf, Mikl\'os Lajk\'o, Philippe Corboz, Andreas~M. L\"auchli, Karlo
  Penc, and Fr\'ed\'eric Mila.
\newblock Plaquette order in the su(6) heisenberg model on the honeycomb
  lattice.
\newblock {\em Phys. Rev. B}, 93:201113, May 2016.

\bibitem{VC04}
F.~Verstraete and J.~I. Cirac.
\newblock Valence-bond states for quantum computation.
\newblock {\em Phys. Rev. A}, 70:060302, Dec 2004.

\bibitem{PVW+07}
D.~P\'erez-Garc\'{i}a, F.~Verstraete, M.~M. Wolf, and J.~I. Cirac.
\newblock Matrix product state representations.
\newblock {\em Quant. Inf. Comput.}, 7(5):401--430, 2007.

\bibitem{ZZX+03}
D.~L. Zhou, B.~Zeng, Z.~Xu, and C.~P. Sun.
\newblock Quantum computation based on d-level cluster state.
\newblock {\em Phys. Rev. A}, 68:062303, Dec 2003.

\bibitem{Cla06}
Sean Clark.
\newblock Valence bond solid formalism for d-level one-way quantum computation.
\newblock {\em J. Phys. A: Mathematical and General}, 39(11):2701, 2006.

\bibitem{GR07}
Martin Greiter and Stephan Rachel.
\newblock Valence bond solids for $\mathrm{SU}(n)$ spin chains: Exact models,
  spinon confinement, and the haldane gap.
\newblock {\em Phys. Rev. B}, 75:184441, May 2007.

\bibitem{KHK08}
Hosho Katsura, Takaaki Hirano, and Vladimir~E Korepin.
\newblock Entanglement in an su(n) valence-bond-solid state.
\newblock {\em J. Phys. A: Mathematical and Theoretical}, 41(13):135304, 2008.

\bibitem{MUM+14}
Takahiro Morimoto, Hiroshi Ueda, Tsutomu Momoi, and Akira Furusaki.
\newblock ${Z}_3$ symmetry-protected topological phases in the su(3) aklt
  model.
\newblock {\em Phys. Rev. B}, 90:235111, Dec 2014.

\bibitem{TZX08}
Hong-Hao Tu, Guang-Ming Zhang, and Tao Xiang.
\newblock Class of exactly solvable {SO}(n) symmetric spin chains with matrix
  product ground states.
\newblock {\em Phys. Rev. B}, 78:094404, Sep 2008.

\bibitem{RS91}
N.~Read and S.~Sachdev.
\newblock Large-{N} expansion for frustrated quantum antiferromagnets.
\newblock {\em Phys. Rev. Lett.}, 66(13):1773, 1991.

\bibitem{WZ05}
Congjun Wu and Shou-Cheng Zhang.
\newblock Sufficient condition for absence of the sign problem in the fermionic
  quantum monte carlo algorithm.
\newblock {\em Phys. Rev. B}, 71(15):155115, 2005.

\bibitem{FC09}
R.~Flint and P.~Coleman.
\newblock Symplectic {N} and time reversal in frustrated magnetism.
\newblock {\em Phys. Rev. B}, 79:014424, Jan 2009.

\bibitem{GE10}
D.~Gross and J.~Eisert.
\newblock Quantum computational webs.
\newblock {\em Phys. Rev. A}, 82:040303, Oct 2010.

\bibitem{BM08}
Gavin~K. Brennen and Akimasa Miyake.
\newblock Measurement-based quantum computer in the gapped ground state of a
  two-body hamiltonian.
\newblock {\em Phys. Rev. Lett.}, 101:010502, Jul 2008.

\bibitem{DB09}
Andrew~C. Doherty and Stephen~D. Bartlett.
\newblock Identifying phases of quantum many-body systems that are universal
  for quantum computation.
\newblock {\em Phys. Rev. Lett.}, 103:020506, Jul 2009.

\bibitem{ESB+12}
Dominic~V. Else, Ilai Schwarz, Stephen~D. Bartlett, and Andrew~C. Doherty.
\newblock Symmetry-protected phases for measurement-based quantum computation.
\newblock {\em Phys. Rev. Lett.}, 108:240505, Jun 2012.

\bibitem{PW15}
Abhishodh Prakash and Tzu-Chieh Wei.
\newblock Ground states of one-dimensional symmetry-protected topological
  phases and their utility as resource states for quantum computation.
\newblock {\em Phys. Rev. A}, 92:022310, Aug 2015.

\bibitem{MM15}
Jacob Miller and Akimasa Miyake.
\newblock Resource quality of a symmetry-protected topologically ordered phase
  for quantum computation.
\newblock {\em Phys. Rev. Lett.}, 114:120506, Mar 2015.

\bibitem{Sch11}
Ulrich Schollw{\"o}ck.
\newblock The density-matrix renormalization group in the age of matrix product
  states.
\newblock {\em Ann. Phys.}, 326(1):96--192, 2011.

\bibitem{Oru14}
Rom{\'a}n Or{\'u}s.
\newblock A practical introduction to tensor networks: Matrix product states
  and projected entangled pair states.
\newblock {\em Ann. Phys.}, 349:117--158, 2014.

\bibitem{Sti55}
W.~Forrest Stinespring.
\newblock {Positive Functions on C*-algebras}.
\newblock {\em Proc. Am. Math. Soc.}, 6(2):211--216, Apr 1955.

\bibitem{SSV+05}
C.~Sch\"on, E.~Solano, F.~Verstraete, J.~I. Cirac, and M.~M. Wolf.
\newblock Sequential generation of entangled multiqubit states.
\newblock {\em Phys. Rev. Lett.}, 95:110503, Sep 2005.

\bibitem{FNW92}
Mark Fannes, Bruno Nachtergaele, and Reinhard~F Werner.
\newblock Finitely correlated states on quantum spin chains.
\newblock {\em Commun. Math. Phys.}, 144(3):443--490, 1992.

\bibitem{SWP+09}
Mikel Sanz, Michael~M Wolf, David Perez-Garc{\'\i}a, and J~Ignacio Cirac.
\newblock Matrix product states: Symmetries and two-body hamiltonians.
\newblock {\em Phys. Rev. A}, 79(4):042308, 2009.

\bibitem{GC99}
Daniel Gottesman and Isaac~L Chuang.
\newblock Demonstrating the viability of universal quantum computation using
  teleportation and single-qubit operations.
\newblock {\em Nature}, 402(6760):390--393, 1999.

\bibitem{ZLC00}
Xinlan Zhou, Debbie~W Leung, and Isaac~L Chuang.
\newblock Methodology for quantum logic gate construction.
\newblock {\em Phys. Rev. A}, 62(5):052316, 2000.

\bibitem{AKLT87}
Ian Affleck, Tom Kennedy, Elliott~H. Lieb, and Hal Tasaki.
\newblock Rigorous results on valence-bond ground states in antiferromagnets.
\newblock {\em Phys. Rev. Lett.}, 59:799--802, Aug 1987.

\bibitem{AKLT88}
Ian Affleck, Tom Kennedy, Elliott~H Lieb, and Hal Tasaki.
\newblock Valence bond ground states in isotropic quantum antiferromagnets.
\newblock In {\em Condensed Matter Physics and Exactly Soluble Models}, pages
  253--304. Springer, 1988.

\bibitem{BZ98}
Ya.~G. Berkovich and E.~M. Zhmud.
\newblock {\em Characters of Finite Groups}.
\newblock American Mathematical Society, Providence, Rhode Island, 1998.
\newblock Vol. 1.

\bibitem{EBD13}
Dominic~V. Else, Stephen~D. Bartlett, and Andrew~C. Doherty.
\newblock Hidden symmetry-breaking picture of symmetry-protected topological
  order.
\newblock {\em Phys. Rev. B}, 88:085114, Aug 2013.

\bibitem{WR15}
Tzu-Chieh Wei and Robert Raussendorf.
\newblock Universal measurement-based quantum computation with spin-2
  {A}ffleck-{K}ennedy-{L}ieb-{T}asaki states.
\newblock {\em Phys. Rev. A}, 92:012310, Jul 2015.

\bibitem{RWP+16}
R.~Raussendorf, D.-S. Wang, A.~Prakash, T.-C. Wei, and D.~T. Stephen.
\newblock Symmetry-protected topological phases with uniform computational
  power in one dimension, 2016.
\newblock arXiv:quant-ph/1609.07549.

\bibitem{SWP+16}
D.~T. Stephen, D.-S. Wang, A.~Prakash, T.-C. Wei, and R.~Raussendorf.
\newblock Determining the computational power of symmetry protected topological
  phases, 2016.
\newblock arXiv:quant-ph/1611.08053.

\bibitem{HPC+12}
Jutho Haegeman, David P\'erez-Garc\'{\i}a, Ignacio Cirac, and Norbert Schuch.
\newblock Order parameter for symmetry-protected phases in one dimension.
\newblock {\em Phys. Rev. Lett.}, 109:050402, Jul 2012.

\bibitem{SR08}
Dirk Schuricht and Stephan Rachel.
\newblock Valence bond solid states with symplectic symmetry.
\newblock {\em Phys. Rev. B}, 78:014430, Jul 2008.

\bibitem{TZX+09}
Hong-Hao Tu, Guang-Ming Zhang, Tao Xiang, Zheng-Xin Liu, and Tai-Kai Ng.
\newblock Topologically distinct classes of valence-bond solid states with
  their parent hamiltonians.
\newblock {\em Phys. Rev. B}, 80:014401, Jul 2009.

\bibitem{Ram10}
Pierre Ramond.
\newblock {\em Group theory: a physicist's survey}.
\newblock Cambridge University Press, 2010.

\bibitem{FOS07}
Kazuyuki Fujii, Hiroshi Oike, and Tatsuo Suzuki.
\newblock More on the isomorphism {SU}(2) $\otimes$ {SU}(2) $\cong$ {SO}(4).
\newblock {\em Int. J. Geom. Methods Mod. Phys.}, 4(03):471--485, 2007.

\bibitem{RG02}
Terry Rudolph and Lov Grover.
\newblock A 2 rebit gate universal for quantum computing, 2002.
\newblock arXiv:quant-ph/0210187.

\end{thebibliography}
\bibliographystyle{unsrt}
\end{document}